\DeclareMathOperator\optsol{Optsol}
\DeclareMathOperator\arity{ar}
\DeclareMathOperator*\argmin{\arg\,\min}
\DeclareMathOperator*\argmax{\arg\,\max}
\DeclareMathOperator\var{Var}
\DeclareMathOperator\val{val}
\newcommand{\reduces}{\ensuremath{\leq^{\mathrm{CV}}}}
\newcommand{\Nand}{\ensuremath{\textsc{NAND}}}
\newcommand{\overbar}[1]{\mkern 2.7mu\overline{\mkern-2.7mu#1\mkern-2.7mu}\mkern 2.7mu}
\newcommand{\clone}[1]{\ensuremath{\mathcal{#1}}}
\newcommand{\cclone}[1]{\ensuremath{\langle #1 \rangle}}
\newcommand{\pcclone}[1]{\ensuremath{\langle #1 \rangle_{\nexists}}}
\newcommand{\eq}{{\rm eq}}
\newcommand{\neqq}{{\rm neq}}
\newcommand{\ppol}{\ensuremath{\rm{pPol}}}
\newcommand{\pol}{\ensuremath{\rm{Pol}}}
\newcommand{\F}{{\rm F}} 
\newcommand{\T}{{\rm T}} 
\newcommand{\Rddd}{\ensuremath{R_{\scriptscriptstyle 3
      \neq}^{\scriptscriptstyle 1/3}}}
\newcommand{\orn}[2]{\ensuremath{\mathrm{OR}^{
      #2}_{\scriptscriptstyle #1}}}
\newcommand{\nandn}[2]{\ensuremath{\mathrm{NAND}^{#2}_{\scriptscriptstyle
      #1}}}
\newcommand{\oddn}[2]{\ensuremath{\mathrm{ODD}^{#2}_{\scriptscriptstyle
      #1}}}
\newcommand{\evenn}[2]{\ensuremath{\mathrm{EVEN}^{#2}_{\scriptscriptstyle
      #1}}}
\newcommand{\cc}[3]{\ensuremath{\cloneFont{{#1}^{#2}_{#3}}}}
\newcommand{\cloneFont}[1]{\mathsf{#1}}
\newcommand{\problemFont}[1]{\textsc{#1}}
\newcommand{\mathCommandFont}[1]{\mathrm{#1}}
\newcommand{\tup}[1]{\mathbf{#1}}
\newcommand{\Wprefix}{\protect\ensuremath\problemFont{w}}
\newcommand{\UWprefix}{\protect\ensuremath\problemFont{u}}
\newcommand{\CSP}{\protect\ensuremath\problemFont{CSP}}
\newcommand{\VCSP}{\protect\ensuremath\problemFont{VCSP}}
\newcommand{\MAXCUT}{\protect\ensuremath\problemFont{Max-Cut}}
\newcommand{\UWVCSP}{\UWprefix\protect\ensuremath\problemFont{-VCSP}}
\newcommand{\MO}{\UWprefix\protect\ensuremath\problemFont{-Max-Ones}}
\newcommand{\Maxones}{\protect\ensuremath\problemFont{Max-Ones}}
\newcommand{\WMO}{\Wprefix\protect\ensuremath\problemFont{-Max-Ones}}
\newcommand{\MinO}{\UWprefix\protect\ensuremath\problemFont{-Min-Ones}}
\newcommand{\MAXCSP}{\protect\ensuremath\problemFont{Max-CSP}}
\newcommand{\MINCSP}{\protect\ensuremath\problemFont{Min-CSP}}
\newcommand{\SAT}{\protect\ensuremath\problemFont{SAT}}
\newcommand{\CloneBF}{\protect\ensuremath{\cloneFont{BF}}}
\newcommand{\CloneM}{\protect\ensuremath{\cloneFont{M}}}
\newcommand{\CloneL}{\protect\ensuremath{\cloneFont{L}}}
\newcommand{\CloneR}{\protect\ensuremath{\cloneFont{R}}}
\newcommand{\CloneD}{\protect\ensuremath{\cloneFont{D}}}
\newcommand{\CloneN}{\protect\ensuremath{\cloneFont{N}}}
\newcommand{\CloneS}{\protect\ensuremath{\cloneFont{S}}}
\newcommand{\CloneV}{\protect\ensuremath{\cloneFont{V}}}
\newcommand{\CloneE}{\protect\ensuremath{\cloneFont{E}}}
\newcommand{\CloneI}{\protect\ensuremath{\cloneFont{I}}}
\newcommand{\CoCloneII}{\protect\ensuremath{\cloneFont{II}}}
\newcommand{\B}{\mathbb{B}}
\newcommand{\SE}{\textsc{SE}}
\newcommand{\Rel}[1]{\ensuremath{R_{\cloneFont{#1}}}}
\newcommand{\id}{{\protect\ensuremath{\mathCommandFont{id}}}}
\newcommand{\dual}{{\protect\ensuremath{\mathCommandFont{dual}}}}
\begin{document}
\title{Relating the Time Complexity of Optimization
  Problems in Light of the Exponential-Time Hypothesis\thanks{This is
    an extended version of {\em Relating the Time Complexity of
      Optimization Problems in Light of the Exponential-Time
      Hypothesis}, appearing in Proceedings of the 39th International
    Symposium on Mathematical Foundations of Computer Science MFCS
    2014 Budapest, August 25-29, 2014}}

\author{Peter Jonsson\inst{1}, Victor Lagerkvist\inst{1},
Johannes Schmidt\inst{1} and Hannes Uppman\inst{1}} 

\institute{Department of Computer and Information Science, Link\"oping University, Sweden\\
\email{\{peter.jonsson, victor.lagerkvist, johannes.schmidt, hannes.uppman\}@liu.se}}
\maketitle

\begin{abstract} Obtaining lower bounds for NP-hard problems has for a long
  time been an active area of research. Recent algebraic techniques introduced by
  Jonsson et al.$\,$(SODA 2013) show that the time complexity of the
  parameterized $\SAT(\cdot)$ problem correlates to the lattice of
  strong partial clones. With this ordering they
  isolated a relation $R$ such that $\SAT(R)$ can
  be solved at least as fast as any other NP-hard $\SAT(\cdot)$
  problem. In this paper we extend this method and show that such
  languages also exist for the {\em max ones problem} ($\Maxones(\Gamma)$)
  and the {\em Boolean valued constraint satisfaction problem} over
  finite-valued constraint languages ($\VCSP(\Delta)$).
With the help of these languages we relate $\Maxones$ and $\VCSP$ to the exponential time hypothesis
in several different ways.
\end{abstract}

\section{Introduction}
A superficial analysis of the NP-complete problems may lead one to
think that they are a highly uniform class of problems: in fact, under
polynomial-time reductions, the NP-complete problems may be viewed as
a {\em single} problem.  However, there are many indications (both
from practical and theoretical viewpoints) that the NP-complete
problems are a diverse set of problems with highly varying properties,
and this becomes visible as soon as one starts using more refined
methods.  This has inspired a strong line of research on the ``inner
structure'' of the set of NP-complete problem. Examples include the
intensive search for faster algorithms for NP-complete
problems~\cite{Woeginger:CO2003} and the highly influential work on
the {\em exponential time hypothesis} (ETH) and its
variants~\cite{Lokshtanov:etal:beatcs2011}. Such research might not
directly resolve whether P is equal to NP or not, but rather attempts
to explain the seemingly large difference in complexity between
NP-hard problems and what makes one problem harder than
another. Unfortunately there is still a lack of general methods for
studying and comparing the complexity of NP-complete problems with
more restricted notions of reducibility. Jonsson et
al.~\cite{Jonsson:etal:soda2013} presented a framework based on
{\em clone theory}, applicable to problems that can
be viewed as ``assigning values to variables'', such as constraint
satisfaction problems, the vertex cover problem, and integer
programming problems. To analyze and relate the complexity of these problems in
greater detail we utilize
polynomial-time reductions which increase the number of variables by a
constant factor ({\em linear variable reductions} or {\em LV-reductions})
and reductions which increases the amount of variables by a constant
({\em constant variable reductions} or {\em CV-reductions}). Note the
following: (1) if a problem $A$ is solvable in $O(c^n)$ time (where
$n$ denotes the number of variables) for all $c > 1$ and if problem
$B$ is LV-reducible to $A$ then $B$ is also solvable in $O(c^n)$ time
for all $c > 1$ and (2) if $A$ is solvable in time $O(c^n)$ and if $B$
is CV-reducible to $A$ then $B$ is also solvable in time
$O(c^n)$. Thus LV-reductions preserve subexponential complexity while
CV-reductions preserve exact complexity.  Jonsson et
al.~\cite{Jonsson:etal:soda2013} exclusively studied the Boolean
satisfiability $\SAT(\cdot)$ problem and identified an NP-hard
$\SAT(\{R\})$ problem CV-reducible to all other NP-hard
$\SAT(\cdot)$ problems. Hence $\SAT(\{R\})$ is, in a sense, the {\em
  easiest} NP-complete $\SAT(\cdot)$ problem since if $\SAT(\Gamma)$ can be
solved in $O(c^n)$ time, then this holds for $\SAT(\{R\})$, too.  With
the aid of this result, they analyzed the consequences of
subexponentially solvable $\SAT(\cdot)$ problems by utilizing the
interplay between CV- and LV-reductions.  As a by-product, Santhanam
and Srinivasan's~\cite{Santhanam:Srinivasan:icalp2012} negative result
on sparsification of infinite constraint languages was shown not to
hold for finite languages.

We believe that the existence and construction of such easiest
languages forms an important puzzle piece in the quest of relating the
complexity of NP-hard problems with each other, since it effectively
gives a lower bound on the time complexity of a given problem with
respect to constraint language restrictions. As a logical continuation
on the work on $\SAT(\cdot)$ we pursue the study of CV- and
LV-reducibility in the context of Boolean optimization problems. In
particular we investigate the complexity of $\Maxones(\cdot)$ and
$\VCSP(\cdot)$ and introduce and extend several non-trivial methods
for this purpose. The results confirms that methods based on universal
algebra are indeed useful when studying broader classes of NP-complete
problems.
%
The $\Maxones(\cdot)$ problem~\cite{Khanna:etal:sicomp2000} is a variant of $\SAT(\cdot)$
where the goal is to find a satisfying assignment which
maximizes the number of variables assigned the value 1. This problem is
closely related to the \textsc{0/1 Linear Programming} problem. The
$\VCSP(\cdot)$ problem is a function minimization problem that generalizes the
$\problemFont{Max-CSP}$ and $\problemFont{Min-CSP}$ problems~\cite{Khanna:etal:sicomp2000}.
We treat both the unweighted and
weighted versions of these problems and use the prefix $\UWprefix$ to denote
the unweighted problem and $\Wprefix$ to denote the weighted
version. 
These problems are well-studied with respect to separating tractable
cases from NP-hard cases~\cite{Khanna:etal:sicomp2000,thapper2013} but
much less is known when considering the weaker schemes of
LV-reductions and CV-reductions. We begin (in
Section~\ref{section:maxones}) by identifying the easiest language for
$\WMO(\cdot)$. The proofs make heavy use of the {\em algebraic method} for
constraint satisfaction
problems~\cite{jeavons1998,Jeavons:etal:closure} and the {\em weak
  base method}~\cite{schnoor2008b}. The algebraic method
was introduced for studying the computational complexity of constraint
satsifaction problems up to polynomial-time reductions while the weak
base method~\cite{schnoor2008a} was shown by Jonsson et
al.~\cite{Jonsson:etal:soda2013} to be useful for studying
CV-reductions. To prove the main result we
however need even more powerful reduction techniques based on {\em weighted
  primitive positive implementations}~\cite{fourel,thapper}. For $\VCSP(\cdot)$ the
situation differs even more since the algebraic techniques developed for
$\CSP(\cdot)$ are not applicable --- instead we use 
\emph{multimorphisms}~\cite{vcsp22} when considering the
complexity of $\VCSP(\cdot)$. 
We prove (in Section ~\ref{section:vcsp}) that the
binary function $f_{\ne}$ which returns $0$ if its two arguments are
different and $1$ otherwise,
results in the easiest NP-hard $\VCSP(\cdot)$ problem. This problem is
very familiar since it is the {\sc Max Cut} problem slightly
disguised.  The complexity landscape surrounding these problems is
outlined in Section~\ref{sec:broadpicture}.
 
With the aid of the languages identified in Section~\ref{section:easy_problems},
we continue (in Section~\ref{section:eth}) by relating $\Maxones$ and
$\VCSP$ with LV-reductions and connect them with the ETH. Our results
imply that (1) if the ETH is true then no NP-complete $\MO(\Gamma)$, $\WMO(\Gamma)$,
or $\VCSP(\Delta)$ is solvable in subexponential time and (2) that if the ETH is false then $\MO(\Gamma)$ and
$\UWVCSP_d(\Delta)$ are solvable in subexponential time for
every
choice of $\Gamma$ and $\Delta$ and $d
\geq 0$. Here $\UWVCSP_d(\Delta)$ is the $\UWVCSP(\Delta)$ problem restricted to instances
where the sum to minimize contains at most $dn$ terms.
Thus, to disprove the ETH, our result implies
that it is sufficient to find a
single language $\Gamma$ or a set of cost functions $\Delta$ such that
 $\MO(\Gamma)$, $\WMO(\Gamma)$
or $\VCSP(\Delta)$
is NP-hard and solvable in subexponential time.

\section{Preliminaries}

Let $\Gamma$ denote a finite set of finitary relations over $\B =
\{0,1\}$.  We call $\Gamma$ a {\em constraint language}. Given $R
\subseteq \B^{k}$ we let $\arity(R)= k$ denote its arity,
and similarly for functions. When $\Gamma = \{R\}$ we typically omit
the set notation and treat $R$ as a constraint language.

\subsection{Problem Definitions}
The {\em constraint satisfaction problem} over $\Gamma$ 
($\CSP(\Gamma)$) is defined as follows.

\smallskip

\noindent
{\sc Instance:} A set $V$ of variables and a set $C$ of constraint
applications $R(v_1,\ldots,v_k)$ where $R \in \Gamma$, $k=\arity(R)$, and
$v_1,\ldots,v_k \in V$.

\noindent
{\sc Question:} Is there a
function $f:V \rightarrow \B$ such
that $(f(v_1),\ldots,f(v_k)) \in R$ for each 
$R(v_1,\ldots,v_k)$ in $C$?

\smallskip

For the Boolean domain this problem is typically denoted as
$\SAT(\Gamma)$. By $\SAT(\Gamma)$-$B$ we mean the $\SAT(\Gamma)$
problem restricted to instances where each variable can occur in at
most $B$ constraints. This restricted problem is occasionally useful
since each instance contains at most $B \, n$
constraints. 
%
The {\em weigthed maximum ones problem} over
$\Gamma$ ($\WMO(\Gamma)$) is an optimization version of
$\SAT(\Gamma)$
 where we 
for an instance on variables $\{x_1,\dots,x_n\}$ and weights $w_i \in \mathbb{Q}_{\ge 0}$ want to find a solution $h$ for which $\sum_{i=1}^n w_i \, h(x_i)$ is maximal.
The {\em unweigthed maximum ones problem} ($\MO(\Gamma)$) is the
$\WMO(\Gamma)$ problem where all weights have the value 1.
A {\em finite-valued cost function} on $\B$ is a
function
$f:\B^k \rightarrow \mathbb{Q}_{\geq 0}$.
The {\em valued constraint satisfaction problem} over
a finite set of finite-valued cost functions $\Delta$ ($\VCSP(\Delta)$) is defined as follows.

\smallskip

\noindent
{\sc Instance:} A set $V = \{x_1, \ldots, x_n\}$ of variables and
the
objective function $f_I(x_1, \ldots, x_n) = \sum^{q}_{i=1}w_i \, f_i(\tup{x}^i)$ where, for every $1 \leq i \leq q, f_i \in \Delta,
\tup{x}^i \in V^{\arity(f_i)}$, and $w_i \in \mathbb{Q}_{\geq 0}$ is
a weight.

\noindent
{\sc Goal:} Find a function $h : V \to \B$ such that
$f_I(h(x_1),\dots,h(x_n))$ is minimal.

\smallskip

When the set of cost functions is singleton 
$\VCSP(\{f\})$ is written as $\VCSP(f)$. We let $\UWVCSP$ be the $\VCSP$
problem without weights and $\UWVCSP_d$ (for $d \geq 0$) denote the
$\UWVCSP$ problem restricted to instances containing at most $d \,
|\var(I)|$ constraints. Many optimization problems can be viewed as
$\VCSP(\Delta)$ problems for suitable $\Delta$: well-known examples
are the $\MAXCSP(\Gamma)$ and $\MINCSP(\Gamma)$ problems where the
number of satisfied constraints in a $\CSP$ instance are maximized or
minimized. For each $\Gamma$, there obviously exists sets of
cost functions $\Delta_{\min},\Delta_{\max}$
such that $\MINCSP(\Gamma)$ is polynomial-time equivalent to
$\VCSP(\Delta_{\min})$ and
$\MAXCSP(\Gamma)$ is polynomial-time equivalent to
$\VCSP(\Delta_{\max})$.
We have defined the problems
$\UWVCSP$, $\VCSP$, $\MO$ and $\WMO$ as optimization problems, but to
obtain a more uniform treatment we often view them as decision
problems, i.e.\ 
given $k$ we ask if there is a 
solution with objective value $k$ or better.










\subsection{Size-Preserving Reductions and Subexponential Time}
If $A$ is a computational problem we let $I(A)$ be the set of problem
instances and $\|I\|$ be the size of any $I \in I(A)$, i.e.\ 
the number of bits required to represent $I$.  Many problems can in
a natural way be viewed as problems of assigning values from a fixed
finite set to a collection of variables.  This is certainly the case
for $\SAT(\cdot)$, $\Maxones(\cdot)$ and $\VCSP(\cdot)$ but it
is also the case for various graph problems such as $\MAXCUT$ and
{\sc{Max Independent Set}}. We call problems of this kind {\em
  variable problems} and let $\var(I)$ denote the set of variables of
an instance $I$.
\begin{definition}
  Let $A_1$ and $A_2$ be variable problems in NP.
 The function $f$  from $I(A_1)$ to $I(A_2)$ is a {\em many-one
   linear variable reduction} (LV-reduction) with parameter $C \geq 0$
 if: (1) $I$ is a yes-instance of $A_1$ if and only if $f(I)$ is a
  yes-instance of $A_2$, (2) $|\var(f(I))| = C \cdot |\var(I)| + O(1)$, and 
  (3) $f(I)$ can be computed in time $O(\operatorname{poly}(\|I\|))$.
\end{definition}

LV-reductions can be seen as a restricted form of SERF-reductions~\cite{impagliazzo2001}. 
The term CV-reduction is
used to denote LV-reductions with parameter 1, and we write $A_1
\reduces A_2$ to denote that the problem $A_1$ has an CV-reduction to
$A_2$. If $A_1$ and $A_2$ are two NP-hard problems we say that $A_1$
is {\em at least as easy} as (or {\em not harder than}) $A_2$ if $A_1$
is solvable in $O(c^{|\var(I)|})$ time whenever $A_1$ is solvable in 
$O(c^{|\var(I)|})$ time. By definition if $A_1 \reduces A_2$ then $A_1$ is not
harder than $A_2$ but the converse is not true in general.
A problem solvable in time $O(2^{c \, |\var(I)|})$ for all $c > 0$
is a {\em subexponential problem}, and $\SE$ denotes 
the class of all variable problems solvable in subexponential time. It is
straightforward to prove that LV-reductions preserve subexponential
complexity in the sense that if $A$ is LV-reducible to $B$ then $A \in \SE$ if $B \in \SE$.
Naturally, $\SE$ can be defined using other complexity parameters than $|\var(I)|$~\cite{impagliazzo2001}.


\subsection{Clone Theory}
An operation $f :\B^k \to \B$ is a {\em polymorphism} of a relation
$R$ if for every $\tup{t}^1,\dots,\tup{t}^k \in R$ it holds that
$f(\tup{t}^1,\dots,\tup{t}^k) \in R$, where $f$ is applied
element-wise. In this case $R$ is {\em
  closed}, or {\em invariant}, under $f$.
%
For a set of functions $\cc{F}{}{}$ we define Inv$(\cc{F}{}{})$ (often
abbreviated as $\cc{IF}{}{}$) to be the set of all relations invariant
under all functions in $\cc{F}{}{}$. Dually $\pol(\Gamma)$ for a set
of relations $\Gamma$ is defined to be the set of polymorphisms of
$\Gamma$.  Sets of the form $\pol(\Gamma)$ are known as {\em clones}
and sets of the form Inv$(\cc{F}{}{})$ are known as {\em
  co-clones}. The reader unfamiliar with these concepts is referred to
the textbook by Lau~\cite{lau2006}. The relationship between these
structures is made explicit in the following {\em Galois
  connection}~\cite{lau2006}.
\begin{theorem} \label{theorem:galois}
  Let $\Gamma$, $\Gamma'$ be sets of relations. Then
  $\mathrm{Inv}(\pol(\Gamma')) \subseteq \mathrm{Inv}(\pol(\Gamma))$ if and only if 
  $\pol(\Gamma) \subseteq \pol(\Gamma')$.
\end{theorem}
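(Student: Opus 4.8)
The plan is to derive this entirely from the abstract Galois connection formed by the two order-reversing maps $\pol$ and $\mathrm{Inv}$, avoiding any appeal to a structural characterization of co-clones (such as primitive positive definability); the statement follows formally once three elementary properties of these operators are in place.

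First I would isolate the adjunction that underlies the whole correspondence: for an arbitrary set of relations $\Gamma$ and an arbitrary set of operations $F$,
$$\Gamma \subseteq \mathrm{Inv}(F) \iff F \subseteq \pol(\Gamma),$$
since both inclusions express precisely that every $f \in F$ preserves every $R \in \Gamma$, i.e.\ applying $f$ componentwise to any tuples of $R$ again yields a tuple of $R$. Two consequences fall out immediately. (i) \emph{Antitonicity}: if $F_1 \subseteq F_2$ then $\mathrm{Inv}(F_2) \subseteq \mathrm{Inv}(F_1)$, and dually $\pol$ reverses inclusions, because demanding invariance under more functions is a stronger condition. (ii) \emph{Extensivity of the composites}: $\Gamma \subseteq \mathrm{Inv}(\pol(\Gamma))$ and $F \subseteq \pol(\mathrm{Inv}(F))$, both directly from the definitions.

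The only identity doing real work is the collapse $\pol(\mathrm{Inv}(\pol(\Gamma))) = \pol(\Gamma)$, which I would prove by a two-sided inclusion: extensivity applied to the set $\pol(\Gamma)$ gives $\pol(\Gamma) \subseteq \pol(\mathrm{Inv}(\pol(\Gamma)))$, while extensivity $\Gamma \subseteq \mathrm{Inv}(\pol(\Gamma))$ followed by the antitone map $\pol$ gives the reverse inclusion $\pol(\mathrm{Inv}(\pol(\Gamma))) \subseteq \pol(\Gamma)$. With this in hand the theorem is one line in each direction. For ``$\Leftarrow$'', assuming $\pol(\Gamma) \subseteq \pol(\Gamma')$, antitonicity of $\mathrm{Inv}$ yields $\mathrm{Inv}(\pol(\Gamma')) \subseteq \mathrm{Inv}(\pol(\Gamma))$ at once. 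For ``$\Rightarrow$'', assuming $\mathrm{Inv}(\pol(\Gamma')) \subseteq \mathrm{Inv}(\pol(\Gamma))$, I would apply the antitone map $\pol$ to flip the inclusion, obtaining $\pol(\mathrm{Inv}(\pol(\Gamma))) \subseteq \pol(\mathrm{Inv}(\pol(\Gamma')))$, and then invoke the triple identity on both sides to conclude $\pol(\Gamma) \subseteq \pol(\Gamma')$.

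There is no genuine obstacle here: the content is packaged entirely in the adjunction and the resulting closure identity, both of which are routine. The single point requiring care is bookkeeping of the order-reversal, so that each application of $\pol$ or $\mathrm{Inv}$ flips the inclusion in the intended direction rather than preserving it.
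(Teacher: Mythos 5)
Your proof is correct. Note, however, that the paper does not prove this theorem at all: it is quoted as a known result with a citation to Lau's textbook, so there is no "paper proof" to match, and your self-contained argument is a genuine addition rather than a variant. Your route is the standard order-theoretic one: the adjunction $\Gamma \subseteq \mathrm{Inv}(F) \iff F \subseteq \pol(\Gamma)$, antitonicity and extensivity of the two maps, and the collapse $\pol(\mathrm{Inv}(\pol(\Gamma))) = \pol(\Gamma)$, after which both directions are immediate. A point worth appreciating is \emph{why} this works: the statement as formulated is purely about inclusions among sets of the form $\mathrm{Inv}(\pol(\cdot))$ and $\pol(\cdot)$, so it is a formal consequence of the Galois adjunction and needs no combinatorial input. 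The deeper content usually associated with this Galois connection --- Geiger's theorem that $\mathrm{Inv}(\pol(\Gamma))$ coincides with the co-clone $\cclone{\Gamma}$ of all relations primitive-positive definable over $\Gamma$, which the paper invokes separately when it identifies co-clones with p.p.-closed sets --- genuinely requires a structural argument (interpolation via the relation of all tuples of a power of $\Gamma$-relations), and your proof correctly never claims it. So your argument proves exactly the quoted statement, economically and completely; it just should not be mistaken for a proof of the stronger characterization that the paper's surrounding discussion also relies on.
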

Co-clones can
equivalently be described as sets containing all relations $R$
definable through {\em primitive positive} (p.p.) implementations
over a constraint language $\Gamma$, i.e.\ definitions of the form
$R(x_1, \ldots, x_n) \equiv \exists y_1, \ldots, y_m\, . \,
R_1(\tup{x}^1) \wedge \ldots \wedge R_k(\tup{x}^k)$, where each $R_i
\in \Gamma \cup \{\eq\}$ and each $\tup{x}^i$ is a tuple over
$x_1,\ldots, x_n$, $y_1, \ldots, y_m$ and where $\eq = \{(0,0),(1,1)\}$. As a shorthand we let
$\cclone{\Gamma} = \mathrm{Inv}(\pol(\Gamma))$ for a constraint
language $\Gamma$, and as can be verified this is the smallest set of relations closed under
p.p.\ definitions over $\Gamma$. In this case $\Gamma$ is said to be a
{\em base} of $\cclone{\Gamma}$. 
It is known that if $\Gamma'$ is finite and $\pol(\Gamma) \subseteq
\pol(\Gamma')$ then $\CSP(\Gamma')$ is polynomial-time reducible to
$\CSP(\Gamma)$~\cite{jeavons1998}. With this fact and Post's
classification of all Boolean clones~\cite{pos41} Schaefer's dichotomy
theorem~\cite{Schaefer78} for $\SAT(\cdot)$ follows almost
immediately. See Figure~\ref{figure:clones} and
Table~\ref{table:clones} in Appendix~\ref{appendix:clones} for a
visualization of this lattice and a list of bases. The complexity of
$\Maxones(\Gamma)$ is also preserved under finite expansions with
relations p.p.\ definable in $\Gamma$, and hence follow the standard
Galois connection~\cite{Khanna:etal:sicomp2000}. Note however that
$\pol(\Gamma') \subseteq \pol(\Gamma)$ does not imply that
$\CSP(\Gamma')$ CV-reduces to $\CSP(\Gamma)$ or even that
$\CSP(\Gamma')$ LV-reduces to $\CSP(\Gamma)$ since the number of
constraints is not necessarily linearly bounded by the number of
variables.

To study these restricted classes of reductions we are therefore in need
of Galois connections with increased granularity. In Jonsson et
al.$\,$~\cite{Jonsson:etal:soda2013} the $\SAT(\cdot)$ problem is studied
with the Galois connection between closure under p.p.\ definitions
without existential quantification and {\em strong partial clones}.
We concentrate on the relational description and present the full
definitions of partial polymorphisms and the aforementioned Galois
connection in Appendix~\ref{appendix:weak_bases}.  If $R$ is an
$n$-ary Boolean relation and $\Gamma$ a constraint language then
$R$ has a {\em quantifier-free primitive positive} (q.p.p.)
implementation in $\Gamma$ if $R(x_1, \ldots, x_n) \equiv
R_1(\tup{x}^1) \wedge \ldots \wedge R_k(\tup{x}^k)$, where
each $R_i \in \Gamma \cup \{\eq\}$ and each $\tup{x}^i$ is a tuple
over $x_1,\ldots, x_n$. We use $\pcclone{\Gamma}$ to denote the
smallest set of relations closed under q.p.p.\ definability over $\Gamma$. If
$\cc{IC}{}{} = \pcclone{\cc{IC}{}{}}$ then $\cc{IC}{}{}$
is a {\em weak partial co-clone}. 
In Jonsson et
al.$\,$~\cite{Jonsson:etal:soda2013} it is proven that if $\Gamma' \subseteq
\pcclone{\Gamma}$ and if $\Gamma$ and $\Gamma'$ are both finite
constraint languages then $\CSP(\Gamma') \reduces \CSP(\Gamma)$. It is
not hard to extend this result to the $\Maxones(\cdot)$ problem since it follows the standard Galois
connection, and therefore we use this fact without explicit proof.  A
{\em weak base} $R_w$ of a co-clone $\cc{IC}{}{}$ is then a base of
$\cc{IC}{}{}$ with the property that for any finite base $\Gamma$ of
$\cc{IC}{}{}$ it holds that $R_w \in \pcclone{\Gamma}$. In particular
this means that $\SAT(R_w)$ and $\Maxones(R_w)$ 
CV-reduce to $\SAT(\Gamma)$ and $\Maxones(\Gamma)$
for any base $\Gamma$ of $\cc{IC}{}{}$, and $R_w$
can therefore be seen as the easiest language in the co-clone. The
formal definition of a weak base is included in
Appendix~\ref{appendix:weak_bases} together with a table of weak bases
for all Boolean co-clones with a finite base. These weak bases have
the additional property that they can be implemented without the
equality relation~\cite{Lagerkvist2014}.

\subsection{Operations and Relations}
An operation $f$ is called
{\em arithmetical} if
$f(y,x,x)=f(y,x,y)=f(x,x,y)=y$ for every $x,y \in \B$. The 
$\max$ function is defined as $\max(x,y) = 0$ if $x = y = 0$ and 1
otherwise. We often express a Boolean relation $R$ as a logical
formula whose satisfying assignment corresponds to the tuples of
$R$. $\F$ and $\T$ are the two constant relations $\{(0)\}$ and
$\{(1)\}$ while $\neqq$ denotes inequality,
i.e.\ the relation $\{(0,1), (1,0)\}$. The
relation $\evenn{}{n}$ is defined as $\{(x_1, \ldots, x_n) \in
\B^n \mid \sum^{n}_{i = 1} x_i \text{ is even}\}$. The
relation $\oddn{}{n}$ is defined dually. The relations $\orn{}{n}$ and
$\nandn{}{n}$ are the relations corresponding to the clauses $(x_1
\vee \ldots \vee x_n)$ and $(\overbar{x_1} \vee \ldots \vee \overbar{x_n})$. For
any $n$-ary relation and $R$ we let $R_{m \neq}$, $1 \leq m \leq n$,  denote the $(n+m)$-ary
relation defined as $R_{m \neq}(x_1, \ldots, x_{n + m}) \equiv R(x_1,
\ldots, x_n) \wedge \neqq(x_1,x_{n+1}) \wedge \ldots \wedge \neqq(x_n,x_{n + m})$.
We use $R^{1/3}$ for the relation $\{(0,0,1),(0,1,0), (1,0,0)\}$.
Variables are typically named $x_1, \ldots, x_n$ or $x$ except when
they occur in positions where they are forced to take a particular
value, in which case they are named $c_0$ and
$c_1$ respectively to explicate that they are in essence constants. 
As convention $c_0$ and $c_1$ always occur in the last
positions in the arguments to a predicate. We now
see that $\Rel{II_2}(x_1, \ldots, x_6, c_0,c_1) \equiv
\Rddd(x_1,\ldots,x_6) \wedge \F(c_0) \wedge \T(c_1)$ and
$\Rel{IN_2}(x_1, \ldots, x_8) \equiv \evenn{4 \neq}{4}(x_1,\ldots,x_8)
\wedge (x_1x_4 \leftrightarrow x_2x_3)$ from
Table~\ref{table:weak_bases} in Appendix~\ref{appendix:clones} are the
two relations (where the tuples in the relations are listed as rows)
\begin{align*}
&\Rel{II_2} =
\left\{ \begin{smallmatrix} 
      0 & 0 & 1 & 1 & 1 & 0 & 0 & 1 \\
      0 & 1 & 0 & 1 & 0 & 1 & 0 & 1 \\
      1 & 0 & 0 & 0 & 1 & 1 & 0 & 1
\end{smallmatrix}\right\}
\quad\text{and}\quad
\Rel{IN_2} =
\left\{ \begin{smallmatrix} 
      0 & 0 & 0 & 0 & 1 & 1 & 1 & 1 \\
      0 & 0 & 1 & 1 & 1 & 1 & 0 & 0 \\
      0 & 1 & 0 & 1 & 1 & 0 & 1 & 0 \\
      1 & 1 & 1 & 1 & 0 & 0 & 0 & 0 \\
      1 & 1 & 0 & 0 & 0 & 0 & 1 & 1 \\
      1 & 0 & 1 & 0 & 0 & 1 & 0 & 1
\end{smallmatrix}\right\}.
\end{align*}
%

\section{The Easiest NP-Hard $\Maxones$ and
  $\VCSP$ Problems} \label{section:easy_problems}
We will now study the complexity of $\WMO$ and
$\VCSP$ with respect to CV-reductions. 
We remind the reader that constraint languages $\Gamma$ and sets
of cost functions $\Delta$ are always finite.
We prove that
for both these problems there is a single language which is
CV-reducible to every other NP-hard language. 
Out of the infinite number of candidate languages generating different
co-clones, the language $\{\Rel{II_2}\}$ defines the easiest
$\WMO(\cdot)$ problem, which is the same language as for
$\SAT(\cdot)$~\cite{Jonsson:etal:soda2013}. This might be contrary to
intuition since one could be led to believe that the co-clones in the
lower parts of the co-clone lattice, generated by very simple
languages where the corresponding $\SAT(\cdot)$ problem is in P, would
result in even easier problems.

\subsection {The $\Maxones$ Problem} \label{section:maxones}
Here we use a slight reformulation of Khanna et al.$\,$'s~\cite{Khanna:etal:sicomp2000}
complexity classification of the $\Maxones$ problem expressed in terms of
polymorphisms.

\begin{theorem}[\cite{Khanna:etal:sicomp2000}] \label{theorem:maxones}
Let $\Gamma$ be a finite Boolean constraint
language. $\Maxones(\Gamma)$ is in P if and only if $\Gamma$ is
$1$-closed, $\max$-closed, or closed under an arithmetical operation.
\end{theorem}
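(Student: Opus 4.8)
The plan is to reduce the classification to a finite problem over Post's lattice and then treat the tractable and hard sides separately. Since $\Maxones(\Gamma)$ respects the standard Galois connection --- its complexity is preserved under finite expansions by relations p.p.\ definable in $\Gamma$, i.e.\ by relations in $\cclone{\Gamma}$ --- Theorem~\ref{theorem:galois} lets me work at the level of the clone $\pol(\Gamma)$ rather than with $\Gamma$ itself. Concretely, if $\pol(\Gamma) \subseteq \pol(\Gamma')$ then $\Maxones(\Gamma')$ polynomial-time reduces to $\Maxones(\Gamma)$, so both tractability and hardness propagate along the inclusion order of clones. By Post's classification~\cite{pos41} there are only finitely many clones to inspect up to the relevant boundaries, so it suffices to (i) exhibit a polynomial algorithm whenever $\pol(\Gamma)$ contains the constant-$1$ operation, the binary $\max$, or the arithmetical operation, and (ii) prove NP-hardness for a base of each co-clone sitting just outside all three of these regions.

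For the ``if'' direction I would give three algorithms. If $\Gamma$ is $1$-closed then applying the constant-$1$ polymorphism to any tuples shows $(1,\dots,1)$ lies in every nonempty relation, so the all-ones assignment is feasible and trivially maximizes the weighted number of ones. If $\Gamma$ is $\max$-closed then the solution set of any instance is closed under coordinatewise $\max$, hence (when nonempty) has a unique maximal element dominating all others componentwise; this element simultaneously maximizes $\sum_i w_i\,h(x_i)$ for every choice of nonnegative weights and can be computed in polynomial time by propagation, or by greedily forcing each variable to $1$ and testing feasibility of the resulting $\max$-closed $\CSP$. Finally, closure under the arithmetical operation pins $\Gamma$ down to the width-$2$ affine relations (conjunctions of $x=y$, $x\neq y$, and fixed constants); here I would run union-find together with a parity/$2$-colouring propagation to detect forced constants and contradictions, after which each remaining free equivalence class may be set independently to whichever value maximizes its weight contribution. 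All three algorithms extend unchanged to the weighted problem.

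The substance of the theorem is the ``only if'' direction, i.e.\ NP-hardness for every $\Gamma$ avoiding all three conditions. By the propagation principle above it is enough to locate, in Post's lattice, the finitely many maximal clones that contain neither the constant-$1$ operation, nor $\max$, nor the arithmetical operation, and to show that each corresponding co-clone can p.p.-define a relation for which $\Maxones$ is already known to be NP-hard --- for instance $R^{1/3}$ and $\PosOneThreeSAT$, a relation encoding maximum independent set, or a full clause relation encoding general $\SAT$ with a ones objective. Once such a hard seed is p.p.-definable from $\Gamma$ we have $\pol(\Gamma) \subseteq \pol(\Gamma')$ for the seed $\Gamma'$, and Theorem~\ref{theorem:galois} yields a polynomial reduction establishing hardness.

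The main obstacle is exactly this boundary analysis: one must verify that the three tractability conditions are jointly \emph{exhaustive} of the easy clones --- that there is no clone slipping between them that is neither covered by an algorithm nor reachable by a hardness reduction --- and then supply a correct p.p.-definition of a hard seed at each boundary co-clone. This is the same delicate lattice bookkeeping that underlies Schaefer's dichotomy~\cite{Schaefer78}, and it is where essentially all of the case analysis in Khanna et al.~\cite{Khanna:etal:sicomp2000} is spent.
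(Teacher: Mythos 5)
You are attempting to prove a statement that the paper itself does not prove: Theorem~\ref{theorem:maxones} is imported from Khanna et al.~\cite{Khanna:etal:sicomp2000}, merely restated in polymorphism language, so there is no internal proof to compare against. Judged on its own terms, your ``if'' direction is correct and essentially complete: a constant-$1$ polymorphism puts the all-ones tuple in every nonempty relation; $\max$-closure gives a unique componentwise-maximal solution, computable by $n$ feasibility tests of (dual-Horn) $\max$-closed instances, which is optimal for every nonnegative weighting; and the unique Boolean arithmetical (Pixley) operation generates the clone $\CloneD_1$, whose invariant relations $\cc{ID}{}{1}$ are exactly the width-2 affine relations, so union-find with parity propagation works. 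These match the algorithms in~\cite{Khanna:etal:sicomp2000} for the classes 1-valid, weakly positive, and width-2 affine.

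The genuine gap is the ``only if'' direction, which in your proposal is a strategy statement rather than a proof. What is missing is precisely the content of the theorem: (a) the determination of which clones avoid all three operations --- equivalently, the fact (used by the paper in the proof of Theorem~\ref{theorem:umaxones}) that $\Maxones(\Gamma)$ is NP-hard exactly when $\cclone{\Gamma} \supseteq \cc{IS}{2}{1}$ or $\cclone{\Gamma} \in \{\cc{IL}{}{0}, \cc{IL}{}{2}, \cc{IL}{}{3}, \cc{IN}{}{2}\}$ --- and (b) an NP-hardness proof for a seed language inside each of these co-clones. The $\cc{IS}{2}{1}$ side is indeed an easy independent-set encoding, but the affine co-clones $\cc{IL}{}{0}, \cc{IL}{}{2}, \cc{IL}{}{3}$ and the co-clone $\cc{IN}{}{2}$ need separate, non-obvious hardness arguments (in~\cite{Khanna:etal:sicomp2000} these come from reductions of maximum-satisfiability type for linear systems); nothing of the sort is exhibited, and your closing paragraph concedes that this is ``where essentially all of the case analysis is spent.'' Acknowledging the lacuna does not fill it: as written, the hardness half rests on an unproven claim that your three tractable regions are exhaustive. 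Note also that, given how the paper uses the result, a much shorter complete argument exists: prove the three dictionary equivalences ($1$-closed $\Leftrightarrow$ 1-valid, $\max$-closed $\Leftrightarrow$ weakly positive, closed under an arithmetical operation $\Leftrightarrow$ width-2 affine, the last via $\mathrm{Inv}(\CloneD_1) = \cc{ID}{}{1}$) and then invoke the dichotomy of~\cite{Khanna:etal:sicomp2000} as a black box, instead of re-deriving it.
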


The theorem holds for both the weighted and the unweighted version of the
problem and showcases the strength of the algebraic
method since it not only eliminates all constraint languages
resulting in polynomial-time solvable problems, but also tells us
exactly which cases remain, and which properties they satisfy.



\begin{theorem} \label{theorem:umaxones}
$\MO(R)$ $\reduces \MO(\Gamma)$ for some $R \in \{R_{\cc{IS}{2}{1}}$,
  $R_{\cc{II}{}{2}}$, $R_{\cc{IN}{}{2}}$, $R_{\cc{IL}{}{0}}$, $R_{\cc{IL}{}{2}}$,
          $R_{\cc{IL}{}{3}}$, $R_{\cc{ID}{}{2}}\}$ whenever $\MO(\Gamma)$ is NP-hard.
\end{theorem}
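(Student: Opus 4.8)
The plan is to combine the $\Maxones(\cdot)$ classification (Theorem~\ref{theorem:maxones}) with the weak base machinery to reduce the infinitely many NP-hard cases down to a finite set of co-clones, and then exhibit the relevant weak bases explicitly. First I would use Theorem~\ref{theorem:maxones} to identify which co-clones give rise to NP-hard $\MO(\cdot)$ problems: $\MO(\Gamma)$ is NP-hard precisely when $\pol(\Gamma)$ contains no $1$-constant operation, no $\max$, and no arithmetical operation. Translating this into the clone lattice via the Galois connection (Theorem~\ref{theorem:galois}), the tractable clones are exactly those lying above one of the three tractable markers, and their complements in the lattice form a finite ``frontier'' of minimal NP-hard co-clones. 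The candidate relations named in the statement, namely $R_{\cc{IS}{2}{1}}$, $R_{\cc{II}{}{2}}$, $R_{\cc{IN}{}{2}}$, $R_{\cc{IL}{}{0}}$, $R_{\cc{IL}{}{2}}$, $R_{\cc{IL}{}{3}}$, and $R_{\cc{ID}{}{2}}$, are the weak bases of exactly these minimal NP-hard co-clones.

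Next I would invoke the key transfer property of weak bases recalled in the preliminaries: since $\Maxones(\cdot)$ follows the standard Galois connection and the CV-reduction result of Jonsson et al.$\,$\cite{Jonsson:etal:soda2013} extends to $\MO(\cdot)$, for any weak base $R_w$ of a co-clone $\cc{IC}{}{}$ and any finite base $\Gamma$ of $\cc{IC}{}{}$ we have $\MO(R_w) \reduces \MO(\Gamma)$, because $R_w \in \pcclone{\Gamma}$. So the argument reduces to the following: given an arbitrary $\Gamma$ with $\MO(\Gamma)$ NP-hard, locate the co-clone $\cclone{\Gamma}$ in the lattice, walk down to one of the minimal NP-hard co-clones $\cc{IC}{}{}$ contained in it (so that $\Gamma$ is also a base of something generating $\cc{IC}{}{}$, or at least $\cc{IC}{}{} \subseteq \cclone{\Gamma}$ with $R_w \in \pcclone{\Gamma}$), and apply the weak-base CV-reduction to obtain $\MO(R) \reduces \MO(\Gamma)$ for the associated weak base $R$.

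The main obstacle is the final step, and it is genuinely the subtle point: having $\cc{IC}{}{} \subseteq \cclone{\Gamma}$ only guarantees $R_w \in \cclone{\Gamma}$ via p.p.\ definitions \emph{with} existential quantification, whereas a CV-reduction requires the weak base to sit in $\pcclone{\Gamma}$, i.e.\ a q.p.p.\ (quantifier-free) implementation over the \emph{specific} finite $\Gamma$ at hand. This is exactly why the weak base, rather than any base, must be used: the defining property of a weak base is that it lies in $\pcclone{\Gamma}$ for \emph{every} finite base $\Gamma$ of its co-clone. Thus the delicate part is verifying that each minimal NP-hard co-clone in the lattice really does have its weak base among the seven listed relations, and that $\Gamma$ genuinely has one of these co-clones below it; this requires a careful case analysis over the finitely many clones sitting just below the tractable markers, using Post's lattice and the table of weak bases. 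I would organize this as a case distinction according to which tractability condition of Theorem~\ref{theorem:maxones} fails and match each resulting minimal co-clone to its weak base, checking that every listed $R$ is accounted for and that no NP-hard case escapes the enumeration.
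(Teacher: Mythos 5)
Your proposal breaks down at exactly the step you flag as ``the subtle point,'' and the resolution you offer for it does not work. The weak base property (Theorem~\ref{theorem:weak_base}) says only that the weak base $R_w$ of a co-clone $\cc{IC}{}{}$ lies in $\pcclone{\Gamma}$ for every finite base $\Gamma$ \emph{of that same co-clone}. If $\MO(\Gamma)$ is NP-hard because $\cclone{\Gamma}$ \emph{strictly contains} a minimal NP-hard co-clone $\cc{IC}{}{}$, then the weak base of $\cc{IC}{}{}$ need not lie in $\pcclone{\Gamma}$ at all --- ``walking down'' the lattice is precisely what the weak-base machinery does not permit, and your parenthetical ``or at least $\cc{IC}{}{} \subseteq \cclone{\Gamma}$ with $R_w \in \pcclone{\Gamma}$'' asserts exactly the implication that fails. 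The mismatch is already visible in the statement itself: the minimal NP-hard co-clones are only five ($\cc{IS}{2}{1}$, $\cc{IL}{}{0}$, $\cc{IL}{}{2}$, $\cc{IL}{}{3}$, $\cc{IN}{}{2}$), yet the theorem lists seven relations. The relations $R_{\cc{II}{}{2}}$ and $R_{\cc{ID}{}{2}}$ belong to co-clones lying strictly \emph{above} $\cc{IS}{2}{1}$; they must appear as separate cases precisely because no reduction from $\MO(R_{\cc{IS}{2}{1}})$ to $\MO(\Gamma)$ is available when $\cclone{\Gamma} \in \{\cc{II}{}{2},\cc{ID}{}{2}\}$. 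So the structural picture underlying your case analysis cannot be the right one.

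What the paper actually does, after the same classification step, is to work co-clone by co-clone with each co-clone's \emph{own} weak base and then relate those weak bases to the seven listed relations by two mechanisms absent from your proposal. First, the four infinite chains $\cc{IS}{n}{1}$, $\cc{IS}{n}{12}$, $\cc{IS}{n}{11}$, $\cc{IS}{n}{10}$ collapse to their $n=2$ members because $\pcclone{R_{\cc{IS}{2}{1}}} \subset \pcclone{R_{\cc{IS}{n}{1}}}$ (and likewise for the other chains), which can be read off from Table~\ref{table:weak_bases}. Second --- and this is the genuinely new idea --- for the co-clones $\cc{IS}{2}{12}$, $\cc{IS}{2}{11}$, $\cc{IS}{2}{10}$, $\cc{IE}{}{2}$, $\cc{IE}{}{0}$ and $\cc{II}{}{0}$ one does not q.p.p.\ define the target relation itself; instead one q.p.p.\ defines an auxiliary relation $R'$ with two extra coordinates $y_0,y_1$ such that $R'$ projects onto the target relation and any \emph{optimal} Max-Ones solution is forced to map $y_1$ to $1$ and $y_0$ to $0$. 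Using $y_0,y_1$ as global variables then gives $\MO(R) \reduces \MO(\Gamma)$ even though $R$ itself may not lie in $\pcclone{\Gamma}$; the reduction exploits optimality of solutions, not mere q.p.p.\ definability. Without these two mechanisms, and without treating $\cc{II}{}{2}$ and $\cc{ID}{}{2}$ as standalone cases covered by their own weak bases, the case analysis you sketch cannot be completed.
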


\begin{proof}
By Theorem~\ref{theorem:maxones} in combination with
Table~\ref{table:clones} and Figure~\ref{figure:clones} in
Appendix~\ref{appendix:clones} it follows that $\MO(\Gamma)$ is NP-hard if
and only if $\cclone{\Gamma} \supseteq \cc{IS}{2}{1}$ or if
$\cclone{\Gamma} \in \{\cc{IL}{}{0}, \cc{IL}{}{3}, \cc{IL}{}{2},
\cc{IN}{}{2}\}$. In principle we then for every co-clone have to
decide which language is CV-reducible to every other base of the
co-clone, but since a weak base always have this property, we can
eliminate a lot of tedious work and directly consult the precomputed
relations in Table~\ref{table:weak_bases}.  From this we first see
that $\pcclone{R_{\cc{IS}{2}{1}}} \subset \pcclone{R_{\cc{IS}{n}{1}}}$,
$\pcclone{R_{\cc{IS}{2}{12}}} \subset \pcclone{R_{\cc{IS}{n}{12}}}$,
$\pcclone{R_{\cc{IS}{2}{11}}} \subset \pcclone{R_{\cc{IS}{n}{11}}}$
and $\pcclone{R_{\cc{IS}{2}{10}}} \subset
\pcclone{R_{\cc{IS}{n}{10}}}$ for every $n \geq 3$.  Hence in the four
infinite chains $\cc{IS}{n}{1}$, $\cc{IS}{n}{12}$, $\cc{IS}{n}{11}$,
$\cc{IS}{n}{10}$ we only have to consider the bottommost co-clones
$\cc{IS}{2}{1}$, $\cc{IS}{2}{12}$, $\cc{IS}{2}{11}$, $\cc{IS}{2}{10}$.
Observe that if $R$ and $R'$ satisfies $R(x_1,\dots,x_k) \Rightarrow
\exists y_0,y_1.R'(x_1,\dots,x_k,y_0,y_1) \land F(y_0) \land T(y_1)$
and $R'(x_1,\dots,x_k,y_0,y_1) \Rightarrow R(x_1,\dots,x_k) \land
F(y_0)$, and it moreover holds that $R'(x_1, \dots,x_k,y_0,y_1) \in
\pcclone{\Gamma}$, then $\MO(R) \reduces \MO(\Gamma)$, since we can
use $y_0$ and $y_1$ as global variables and because an optimal
solution to the instance we construct will always map $y_1$ to $1$ if
the original instance is satisfiable.  For $\Rel{IS^2_1}(x_1,x_2,c_0)$
we can q.p.p.\ define predicates $\Rel{IS^2_1}'(x_1,x_2,c_0,y_0,y_1)$
with $\Rel{IS^2_{12}}, \Rel{IS^2_{11}},
\Rel{IS^2_{10}}, R_{\cc{IE}{}{2}}, R_{\cc{IE}{}{0}}$ satisfying
these properties as follows:
\begin{itemize}
\item $\Rel{IS^2_1}'(x_1,x_2,c_0,y_0,y_1) \equiv \Rel{IS^2_{12}}(x_1,x_2,c_0,y_1) \land \Rel{IS^2_{12}}(x_1,x_2,y_0,y_1)$,
\item $\Rel{IS^2_1}'(x_1,x_2,c_0,y_0,y_1) \equiv \Rel{IS^2_{11}}(x_1, x_2, c_0,c_0) \land \Rel{IS^2_{11}}(x_1, x_2, y_0,y_0)$,
\item $\Rel{IS^2_1}'(x_1,x_2,c_0,y_0,y_1) \equiv \Rel{IS^2_{10}}(x_1,x_2,c_0,c_0,y_1) \land \Rel{IS^2_{10}}(x_1,x_2,c_0,y_0,y_1)$,
\item $\Rel{IS^2_1}'(x_1,x_2,c_0,y_0,y_1) \equiv R_{\cc{IE}{}{2}}(c_0,x_1,x_2,c_0,y_1) \land R_{\cc{IE}{}{2}}(c_0,x_1,x_2,y_0,y_1)$,
\item $\Rel{IS^2_1}'(x_1,x_2,c_0,y_0,y_1) \equiv R_{\cc{IE}{}{0}}(c_0, x_1,x_2, y_1, c_0) \land R_{\cc{IE}{}{0}}(y_0, x_1,x_2, y_1, y_0)$,
\end{itemize}
and similarly a relation $\Rel{II_2}'$ using $R_{\CoCloneII_0}$ as follows
$\Rel{II_2}'(x_1, x_2,x_3,x_4, x_5, x_6,c_0,c_1,y_0,y_1) \equiv
R_{\CoCloneII_0}(x_1,x_2,x_3,c_0) \land
R_{\CoCloneII_0}(c_0,c_1,y_1,y_0) \land R_{\CoCloneII_0}(x_1,x_4,y_1,
y_0) \land R_{\CoCloneII_0}(x_2,x_5,y_1, y_0) \land
R_{\CoCloneII_0}(x_3,x_6,\\y_1, y_0)$. By Figure~\ref{figure:clones} in
Appendix~\ref{appendix:clones} we
then see that the only remaining cases for $\Gamma$ when
$\cclone{\Gamma} \supset \cc{IS}{2}{1}$ is when $\cclone{\Gamma} =
\cc{II}{}{2}$ or when $\cclone{\Gamma} = \cc{ID}{}{2}$. This concludes
the proof.
\qed
\end{proof}

Using q.p.p.\ implementations to further decrease the set of relations
in Theorem~\ref{theorem:umaxones} appears difficult and
we therefore make use of more powerful implementations.
Let $\optsol(I)$ be the set of all optimal solutions
of a $\WMO(\Gamma)$ instance $I$.
A relation $R$ has 
a \emph{weighted p.p.\ definition} (w.p.p.\ definition) ~\cite{fourel,thapper} in $\Gamma$ if there
exists an instance $I$ of $\WMO(\Gamma)$ on variables $V$ such that
$R=\{ (\phi(v_1),\dots,\phi(v_m)) \mid \phi \in \optsol(I) \}$ for some
$v_1,\dots,v_m \in V$.
The set of all relations w.p.p.\ definable in $\Gamma$ is denoted
$\langle\Gamma\rangle_w$ and we furthermore have that if $\Gamma'
\subseteq \langle\Gamma\rangle_{w}$ is a finite then $\WMO(\Gamma')$
is polynomial-time reducible to $\WMO(\Gamma)$~\cite{fourel,thapper}.
If there is a $\WMO(\Gamma)$ instance $I$ on $V$ such that $R=\{
(\phi(v_1),\dots,\phi(v_m)) \mid \phi \in \optsol(I) \}$ for
$v_1,\dots,v_m \in V$ satisfying $\{v_1,\dots,v_m\}=V$, then we say
that $R$ is q.w.p.p.\ definable in $\Gamma$.  We use
$\langle\Gamma\rangle_{\nexists,w}$ for set of all relations
q.w.p.p.\ definable in $\Gamma$. It is not hard to check that if
$\Gamma' \subseteq \langle\Gamma\rangle_{\nexists,w}$, then every
instance is mapped to an instance of equally many variables --- hence 
$\WMO(\Gamma')$ is CV-reducible to $\WMO(\Gamma)$ whenever $\Gamma'$ is
finite.


\begin{theorem} \label{theorem:easiestwmo}
  Let $\Gamma$ be a constraint language such that $\WMO(\Gamma)$ is
  NP-hard. Then it holds that $\WMO(\Rel{II_2}) \reduces \WMO(\Gamma)$.
\end{theorem}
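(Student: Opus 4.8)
The plan is to combine the CV-reduction already established in Theorem~\ref{theorem:umaxones} with the more powerful weighted implementations (w.p.p.\ and q.w.p.p.\ definitions) to collapse the seven candidate relations down to $\Rel{II_2}$. By Theorem~\ref{theorem:umaxones}, for any $\Gamma$ with $\WMO(\Gamma)$ NP-hard we have $\MO(R) \reduces \MO(\Gamma)$, and hence $\WMO(R) \reduces \WMO(\Gamma)$, for some
$$R \in \{R_{\cc{IS}{2}{1}}, R_{\cc{II}{}{2}}, R_{\cc{IN}{}{2}}, R_{\cc{IL}{}{0}}, R_{\cc{IL}{}{2}}, R_{\cc{IL}{}{3}}, R_{\cc{ID}{}{2}}\}.$$
Since CV-reductions compose, it suffices to show $\WMO(\Rel{II_2}) \reduces \WMO(R)$ for each of these seven relations; the whole argument then chains together. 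The natural tool is q.w.p.p.\ definability: if I can exhibit, for each candidate $R$, that $\Rel{II_2} \in \langle\{R\}\rangle_{\nexists,w}$ (possibly after first securing access to the constants $\F$ and $\T$ as global variables, exactly as in the proof of Theorem~\ref{theorem:umaxones}), then the remark immediately preceding this theorem gives $\WMO(\Rel{II_2}) \reduces \WMO(R)$, and we are done.

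\textbf{Reducing from the affine and} $\cc{IN}{}{2}$, $\cc{ID}{}{2}$ cases. First I would handle the four linear co-clones $R_{\cc{IL}{}{0}}, R_{\cc{IL}{}{2}}, R_{\cc{IL}{}{3}}$ together with $R_{\cc{IN}{}{2}}$ and $R_{\cc{ID}{}{2}}$. The key observation is that $\cclone{\Rel{II_2}} = \cc{II}{}{2}$ is the \emph{largest} NP-hard co-clone for $\Maxones$ (it sits at the top of the relevant part of Post's lattice, with $\cc{IS}{2}{1} \subseteq \cc{II}{}{2}$ and the linear/self-dual co-clones below), so every candidate relation already p.p.-defines $\Rel{II_2}$ in the ordinary Galois sense. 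The task is therefore to upgrade each ordinary p.p.\ definition — which uses existential quantification and so does not control the variable count — into a \emph{weighted} one where the quantified (auxiliary) variables are forced to a definite value in every optimal solution. The standard trick is to attach weights and helper constraints so that the objective function ``pins down'' each auxiliary variable, making it a function of the free variables and hence eliminable without increasing the variable count. Concretely, for each $R$ I would write down an explicit $\WMO(\{R, \F, \T\})$ instance whose optimal solutions, projected onto the intended coordinates, are exactly the tuples of $\Rel{II_2}$; the constants $\F,\T$ are supplied as two global variables just as in Theorem~\ref{theorem:umaxones}.

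\textbf{The remaining case} $R = R_{\cc{IS}{2}{1}}$ \textbf{and the main obstacle.} For $R_{\cc{IS}{2}{1}}$ the situation is the most delicate, because $\cc{IS}{2}{1}$ is \emph{strictly smaller} than $\cc{II}{}{2}$, so $\Rel{II_2}$ is \emph{not} p.p.-definable from $R_{\cc{IS}{2}{1}}$ at all in the classical sense — this is precisely why the weighted machinery is indispensable rather than merely convenient. Here the weighted framework genuinely enlarges the expressive power: with carefully chosen weights, the set of optimal solutions of a $\WMO(R_{\cc{IS}{2}{1}})$ instance can realize relations lying strictly above $\cc{IS}{2}{1}$ in the lattice, and the goal is to realize $\Rel{II_2}$ this way. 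This is the step I expect to be the crux. The plan is to exploit the structure of $R_{\cc{IS}{2}{1}}$ (a near-$\nandn{}{2}$-type relation, which p.p.-defines the $\Maxones$-hard behaviour) and to set up an instance where maximizing the weighted number of ones selects out precisely the $\Rel{II_2}$ pattern on the distinguished variables, using the ``pinning'' idea to keep the projection surjective onto the variable set so that the definition is q.w.p.p.\ rather than merely w.p.p. Verifying that the optimal solutions coincide \emph{exactly} with $\Rel{II_2}$ — no spurious tuples from non-optimal-but-feasible assignments, and no missing tuples — will require a short case analysis over the finitely many assignments, which is where the real work lies; once this is checked for $R_{\cc{IS}{2}{1}}$ and the routine constructions are in place for the other six relations, the theorem follows by composing the resulting CV-reductions with Theorem~\ref{theorem:umaxones}.
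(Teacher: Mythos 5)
Your overall skeleton coincides with the paper's: invoke Theorem~\ref{theorem:umaxones} to reduce to the seven candidate relations, then establish $\WMO(\Rel{II_2}) \reduces \WMO(R)$ for each candidate $R$ by showing $\Rel{II_2} \in \langle R \rangle_{\nexists,w}$. The gap lies in how you propose to produce these q.w.p.p.\ definitions. You assert that, because $\cc{II}{}{2}$ sits at the top of the lattice, ``every candidate relation already p.p.-defines $\Rel{II_2}$ in the ordinary Galois sense,'' so that for the affine, $\cc{IN}{}{2}$ and $\cc{ID}{}{2}$ cases one only needs to ``upgrade'' an existing p.p.\ definition by pinning the auxiliary variables. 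This has the Galois connection backwards. Since $\cc{II}{}{2}$ is the largest co-clone, every candidate $R \neq \Rel{II_2}$ satisfies $\cclone{R} \subsetneq \cc{II}{}{2} = \cclone{\Rel{II_2}}$, which means precisely that $\Rel{II_2} \notin \cclone{R}$: \emph{none} of $R_{\cc{IL}{}{0}}$, $R_{\cc{IL}{}{2}}$, $R_{\cc{IL}{}{3}}$, $R_{\cc{IN}{}{2}}$, $R_{\cc{ID}{}{2}}$ p.p.-defines $\Rel{II_2}$. (Concretely, $\Rel{II_2}$ is not preserved by the majority operation, while every relation in $\cc{ID}{}{2}$ is; it is not preserved by $x \oplus y \oplus z$, while every relation in $\cc{IL}{}{2}$ is.) So there is nothing to upgrade, and the obstacle you correctly single out for $R_{\cc{IS}{2}{1}}$ alone --- that the optimal solutions of a weighted instance must realize a relation lying strictly outside the co-clone generated by the language --- is in fact the situation in all six non-trivial cases.

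Consequently, the real content of the proof --- explicit constructions witnessing $\Rel{II_2} \in \langle R\rangle_{\nexists,w}$ --- is missing in every case, and your proposed route would not find five of them. The paper writes these down directly as $\argmax$ expressions over instances built from the relation $R$ alone (no auxiliary $\F$, $\T$ constraints are used): for example, $\Rel{II_2}$ is obtained as the set of maximizers of $\tup{x}_8$ over assignments satisfying a single $\Rel{IN_2}$ constraint on a suitable permutation of the eight variables, and as the set of maximizers of $\tup{x}_1+\tup{x}_2+\tup{x}_3$ subject to three $\Rel{ID_2}$ constraints; moreover, for $\cc{IL}{}{0}$ and $\cc{IL}{}{3}$ the paper does not target $\Rel{II_2}$ directly but first q.w.p.p.-defines the intermediate relation $\Rel{IL_2}$ and then composes definitions --- a structural step your plan does not anticipate. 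Until such constructions are exhibited and verified by the finite case analysis you mention, the theorem is not proved.
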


\begin{proof}
We utilize q.w.p.p.\ definitions and note that the following holds.
\begin{align*}
\Rel{II_2} &= \smash{\textstyle{\argmax_{\tup{x} \in \B^8 :
  (\tup{x}_7,\tup{x}_1,\tup{x}_2,\tup{x}_6,\tup{x}_8,\tup{x}_4,\tup{x}_5,\tup{x}_3)\in \Rel{IN_2}}  \tup{x}_8}},\\
\Rel{II_2} &= \smash{\textstyle{\argmax_{\tup{x} \in \B^8 :
  (\tup{x}_5,\tup{x}_4,\tup{x}_2,\tup{x}_1,\tup{x}_7,\tup{x}_8),
  (\tup{x}_6,\tup{x}_4,\tup{x}_3,\tup{x}_1,\tup{x}_7,\tup{x}_8),
  (\tup{x}_6,\tup{x}_5,\tup{x}_3,\tup{x}_4,\tup{x}_7,\tup{x}_8) \in
  \Rel{ID_2}} (\tup{x}_1+\tup{x}_2+\tup{x}_3)}},\\
\Rel{II_2} &= \smash{\textstyle{\argmax_{\tup{x} \in \B^8 :
  (\tup{x}_4,\tup{x}_5,\tup{x}_6,\tup{x}_1,\tup{x}_2,\tup{x}_3,\tup{x}_7,\tup{x}_8)\in
  \Rel{IL_2}} (\tup{x}_4+\tup{x}_5+\tup{x}_6)}},\\
\Rel{IL_2} &= \smash{\textstyle{\argmax_{\tup{x} \in \B^8:
  (\tup{x}_7,\tup{x}_1,\tup{x}_2,\tup{x}_3,\tup{x}_8,\tup{x}_4,\tup{x}_5,\tup{x}_6)\in
  \Rel{IL_3}} \tup{x}_8}}, \\
\Rel{IL_2} &= \smash{\textstyle{\argmax_{\tup{x} \in \B^8 :
  (\tup{x}_4,\tup{x}_5,\tup{x}_6,\tup{x}_7),
  (\tup{x}_8,\tup{x}_1,\tup{x}_4,\tup{x}_7),
  (\tup{x}_8,\tup{x}_2,\tup{x}_5,\tup{x}_7),
  (\tup{x}_8,\tup{x}_3,\tup{x}_6,\tup{x}_7) \in \Rel{IL_0}}
    \tup{x}_8}}, \\
\Rel{II_2} &= \smash{\textstyle{\argmax_{\tup{x} \in \B^8 :
  (\tup{x}_1,\tup{x}_2,\tup{x}_7), (\tup{x}_1,\tup{x}_3,\tup{x}_7),
  (\tup{x}_2,\tup{x}_3,\tup{x}_7), (\tup{x}_1,\tup{x}_4,\tup{x}_7),
  (\tup{x}_2,\tup{x}_5,\tup{x}_7), (\tup{x}_3,\tup{x}_6,\tup{x}_7) \in
  \Rel{IS_1^2}} (\tup{x}_1+\dots+\tup{x}_8)}}.
\end{align*}
Hence, $\Rel{II_2} \in \langle R\rangle_{\nexists,w}$ for every $R \in
\{R_{\cc{IS}{2}{1}}, R_{\cc{IN}{}{2}}, R_{\cc{IL}{}{0}},
R_{\cc{IL}{}{2}}, R_{\cc{IL}{}{3}}, R_{\cc{ID}{}{2}}\}$ which by
Theorem~\ref{theorem:umaxones} completes the proof.
\qed
\end{proof}

\subsection {The $\VCSP$ Problem} \label{section:vcsp}
Since $\VCSP$ does not adhere to the standard Galois connection in Theorem~\ref{theorem:galois},
the weak base method is not applicable and alternative methods are required.
For this purpose we use {\em multimorphisms} from Cohen et al.~\cite{vcsp22}.
%
Let $\Delta$ be a set of cost functions on $\B$, let $p$ be a unary
operation on $\B$, and let $f,g$ be binary operations on $\B$.
We say that $\Delta$ admits the binary {\em multimorphism} $(f,g)$ if it holds that
$\nu(f(x,y))+\nu(g(x,y)) \le \nu(x)+\nu(y)$ for every $\nu \in \Delta$ and $x,y \in \B^{\arity(\nu)}$.
Similarly $\Delta$ admits the unary {\em multimorphism} $(p)$ if it holds that
$\nu(p(x)) \le \nu(x)$ for every $\nu \in \Delta$ and $x \in \B^{\arity(\nu)}$.
%
%
%
%
%
%
%
Recall that the function $f_{\ne}$ equals $\{(0,0)\mapsto 1, (0,1)
\mapsto 0, (1,0) \mapsto 0, (1,1) \mapsto 1\}$ and that the
minimisation problem $\VCSP(f_{\ne})$ and the maximisation problem
{\sc Max Cut} are trivially CV-reducible to each other.  We will make
use of (a variant of) the concept of
\emph{expressibility}~\cite{vcsp22}.  We say that a cost function $g$
is \emph{$\nexists$-expressible} in $\Delta$ if $g(x_1,\dots,x_n) =
\sum_{i}w_i f_i(\tup{s}^i) + w$ for some tuples $\tup{s}^i$ over
$\{x_1,\dots,x_n\}$, weights $w_i \in \mathbb{Q}_{\ge 0}$, $w \in
\mathbb{Q}$ and $f_i \in \Delta$.  It is not hard to see that if every
function in a finite set $\Delta'$ is $\nexists$-expressible in
$\Delta$, then $\VCSP(\Delta') \reduces \VCSP(\Delta)$. Note that if the constants 0 and 1 are
expressible in $\Delta$ then we may allow tuples $\tup{s}^i$ over
$\{x_1,\dots,x_n,0,1\}$, and still obtain a CV-reduction.

\begin{theorem} \label{theorem:easiestvcsp}
Let $\Delta$ be a set of finite-valued cost functions on $\B$.
If the problem $\VCSP(\Delta)$ is NP-hard, then $\VCSP(f_{\ne}) \reduces \VCSP(\Delta)$.
\end{theorem}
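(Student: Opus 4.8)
The plan is to combine the dichotomy theorem for finite-valued Boolean $\VCSP$ with the $\nexists$-expressibility machinery set up just before the statement. First I would invoke the classification of Cohen et al.~\cite{vcsp22}: $\VCSP(\Delta)$ is solvable in polynomial time whenever $\Delta$ admits one of the tractable multimorphisms --- the two constant unary multimorphisms $(c_0),(c_1)$, $(\min,\min)$, $(\max,\max)$, and the submodularity multimorphism $(\min,\max)$ --- and is NP-hard otherwise. So throughout I may assume that $\Delta$ admits none of them, and each non-admittance is witnessed concretely. In particular the failure of $(\min,\max)$ produces a cost function $\nu \in \Delta$ and tuples $a,b \in \B^{\arity(\nu)}$ with $\nu(\min(a,b)) + \nu(\max(a,b)) > \nu(a) + \nu(b)$, while the failure of the constant and of the $(\min,\min)/(\max,\max)$ multimorphisms produces cost functions not minimised at the all-zero or all-one tuples.

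Next I would convert the submodularity witness into a binary cost function. Fixing each coordinate on which $a$ and $b$ agree to the corresponding constant, feeding a fresh variable $x$ into the coordinates where $(a_i,b_i)=(0,1)$ and a fresh variable $y$ into those where $(a_i,b_i)=(1,0)$, yields a binary cost function $g(x,y)$ with $g(0,0)+g(1,1) > g(0,1)+g(1,0)$. Writing $g$ in multilinear form $g(x,y) = c + \alpha x + \beta y + \gamma xy$, this is exactly the statement $\gamma > 0$. Since $f_{\ne}(x,y) = 1 - x - y + 2xy$, the scaled function $\tfrac{2}{\gamma} g$ already carries the correct quadratic coefficient, and because the additive constant $w$ in an $\nexists$-expression may be any rational the constant term needs no attention; it remains only to move the two linear coefficients of $\tfrac{2}{\gamma}g$ to $-1$ by adding unary cost functions with nonnegative weights. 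Once $f_{\ne}$ is $\nexists$-expressible in $\Delta$, the remark following the definition of $\nexists$-expressibility gives $\VCSP(f_{\ne}) \reduces \VCSP(\Delta)$, as required.

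This last normalisation is where the remaining multimorphism failures are needed and where I expect the main difficulty to lie. Correcting the linear part requires unary cost functions of both signs of slope, and the substitution producing $g$ requires the constants $0$ and $1$ as global variables; restricting $g$ alone only furnishes the slopes $\alpha$ and $\alpha+\gamma$, which may share a sign, so the missing ingredients must come from the constant- and $(\min,\min)/(\max,\max)$-witnesses. Concretely, I would show that because $\Delta$ admits none of the tractable multimorphisms one can $\nexists$-express unary cost functions uniquely minimised at $0$ and at $1$; these pin global variables $c_0 \mapsto 0$ and $c_1 \mapsto 1$ in every optimal solution and simultaneously supply unary functions of opposite slope. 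The delicate point is that getting these unary functions and the constants seems to require each other --- the diagonal substitution $\nu(x,\dots,x)$ only compares $\nu$ at the all-zero and all-one tuples, whereas the witnesses compare with intermediate tuples --- so a careful bootstrapping is needed, and one must verify that all the pieces always combine with weights in $\mathbb{Q}_{\ge 0}$ to give \emph{exactly} $f_{\ne}$ rather than merely an order-preserving variant, uniformly over every NP-hard $\Delta$. Equivalently this is a Farkas-type feasibility statement about the cone of nonnegative combinations of $\Delta$ and the constants, and establishing its solvability in each residual case is the heart of the argument.
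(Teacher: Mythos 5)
Your proposal sets the problem up the same way the paper does --- invoke the dichotomy, extract a binary supermodular function from the $(\min,\max)$-violation, pin global constants, and repair the linear part with unary functions so that a nonnegative combination is exactly $f_{\ne}$ --- but it stops precisely at the step you yourself call ``the heart of the argument.'' The bootstrapping that produces the constants $v_0,v_1$ and the unary correction functions is left as an unproven Farkas-type feasibility claim, and nothing else in the proposal survives without it: the substitution defining your binary function $g$ needs the constants, and moving the linear coefficients to those of $f_{\ne}$ needs the unary functions. So this is a genuine gap, not a presentational one; you have correctly located the difficulty but not solved it.

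The missing construction, which is how the paper breaks the circularity you describe, is this. NP-hardness is used only through the failure of three multimorphisms: $(0)$, $(1)$ and $(\min,\max)$ (the $(\min,\min)/(\max,\max)$ conditions play no role). Take witnesses $g,h \in \Delta$ with $g(\tup{0}) > g(\tup{u})$ and $h(\tup{1}) > h(\tup{v})$, let $\tup{w}$ be a global minimizer of $g(x_1,\dots,x_a)+h(x_{a+1},\dots,x_b)$ on disjoint variable sets, and define $o(x,y)$ by substituting $x$ into the coordinates where $\tup{w}_i=0$ and $y$ where $\tup{w}_i=1$. Then $(0,1)$ minimizes $o$, and both $o(0,1)<o(0,0)$ and $o(0,1)<o(1,1)$ hold strictly, since equality in either case would contradict one of the two witnesses. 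A three-way case analysis now does the pinning: (i) if $o(0,0) \neq o(1,1)$, say $o(0,0)<o(1,1)$, a heavily weighted term $o(v_0,v_0)$ forces $v_0=0$, after which the unary function obtained from $g$ by putting $x$ in the coordinates where $u_i=1$ and $v_0$ elsewhere takes a strictly smaller value at $1$ than at $0$ and forces $v_1=1$; (ii) if $o(0,0)=o(1,1)$ but $o(0,1)\neq o(1,0)$, then $(0,1)$ is the unique minimizer of $o$ and the single heavily weighted term $o(v_0,v_1)$ pins both constants at once; (iii) if $o(0,0)=o(1,1)$ and $o(0,1)=o(1,0)$, then $o$ is already a positive rational rescaling of $f_{\ne}$ plus a constant and the theorem is proved with no constants needed at all. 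Once $v_0,v_1$ are available, $f_1(x)=\alpha_1 o(v_0,x)+\alpha_2$, normalized so that $f_1(1)=0$ and $f_1(0)=1$, is the unary correction term; and symmetrizing the supermodular witness as $g(x,y)+g(y,x)$ makes the two off-diagonal values equal, so only the diagonal imbalance remains to be corrected --- your worry about needing both signs of slope simultaneously disappears (and both signs are in fact available, via $o(v_0,\cdot)$ and $o(\cdot,v_1)$). With these pieces, the final identity $f_{\ne}=\alpha_1 h' + \alpha_2$ with $\alpha_1 \in \mathbb{Q}_{\ge 0}$ is a short computation rather than a case-by-case feasibility question.
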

 \begin{proof}
Since $\VCSP(\Delta)$ is NP-hard (and since we assume P $\ne$ NP) we
know that $\Delta$ does not admit the unary $(0)$-multimorphism or the
unary $(1)$-multimorphism~\cite{vcsp22}. Therefore there are $g, h \in
\Delta$ and $\tup{u} \in \B^{\arity(g)}$, $\tup{v} \in \B^{\arity(h)}$
such that $g(\tup{0})>g(\tup{u})$ and $h(\tup{1})>h(\tup{v})$.  Let
$\tup{w} \in \argmin_{\tup{x} \in \B^b}
(g(\tup{x}_1,\dots,\tup{x}_a)+h(\tup{x}_{a+1},\dots,\tup{x}_b))$ and
then define $o(x,y) = g(z_1,\dots,z_a)+h(z_{a+1},\dots,z_b)$ where
$z_i=x$ if $\tup{w}_i=0$ and $z_i=y$ otherwise.  Clearly $(0,1) \in
\argmin_{\tup{x} \in \B^2} o(\tup{x})$, $o(0,1) < o(0,0)$, and $o(0,1)
< o(1,1)$.  We will show that we always can force two fresh variables
$v_0$ and $v_1$ to $0$ and $1$, respectively. If $o(0,0) \ne o(1,1)$,
then assume without loss of generality that $o(0,0)<o(1,1)$.  In this
case we force $v_0$ to $0$ with the (sufficiently weighted) term
$o(v_0,v_0)$. Define $g'(x) = g(z_1,\dots,z_{\arity(g)})$ where
$z_i=x$ if $u_i=1$ and $z_i=v_0$ otherwise. Note that $g'(1) < g'(0)$
which means that we can force $v_1$ to $1$. Otherwise $o(0,0) =
o(1,1)$. If $o(0,1) = o(1,0)$, then $f_{\ne} = \alpha_1 o +
\alpha_2$, otherwise assume without loss of generality that
$o(0,1)<o(1,0)$. In this case $v_0,v_1$ can be forced to $0,1$ with the help
of the (sufficiently weighted) term $o(v_0,v_1)$.

We also know that $\Delta$ does not
admit the $(\min,\max)$-multimorphism~\cite{vcsp22} since
$\VCSP(\Delta)$ is NP-hard by assumption.  Hence, there exists a $k$-ary
function $f \in \Delta$ and $\tup{s},\tup{t} \in \B^k$ such that
$f(\min(\tup{s},\tup{t}))+f(\max(\tup{s},\tup{t})) > f(\tup{s})+f(\tup{t})$.
Let $f_1(x) = \alpha_1 o(v_0,x) + \alpha_2$ for some $\alpha_1 \in
\mathbb{Q}_{\ge 0}$ and $\alpha_2 \in \mathbb{Q}$ such that
$f_1(1)=0$ and $f_1(0)=1$.  Let also $g(x,y) = f(z_1,\dots,z_k)$ where
$z_i=v_1$ if $\min(\tup{s}_i,\tup{t}_i)=1$, $z_i=v_0$ if
$\max(\tup{s}_i,\tup{t}_i)=0$, $z_i=x$ if $s_i > t_i$ and $z_i=y$
otherwise.  Note that $g(0,0)=f(\min(\tup{s},\tup{t}))$,
$g(1,1)=f(\max(\tup{s},\tup{t}))$, $g(1,0)=f(\tup{s})$ and
$g(0,1)=f(\tup{t})$.  Set $h(x,y) = g(x,y)+g(y,x)$.  Now
$h(0,1)=h(1,0) < \frac{1}{2} ( h(0,0)+h(1,1) )$.  If $h(0,0)=h(1,1)$,
then $f_{\ne} = \alpha_1 h + \alpha_2$ for some $\alpha_1 \in
\mathbb{Q}_{\ge 0}$ and $\alpha_2 \in \mathbb{Q}$.  Hence, we can without
loss of generality assume that $h(1,1) - h(0,0) = 2$.  Note now that
$h'(x,y) = f_1(x) + f_1(y) + h(x,y)$ satisfies $h'(0,0) = h'(1,1) =
\frac{1}{2}( h(0,0) + h(1,1) + 2 ) $ and $h'(0,1) = h'(1,0) =
\frac{1}{2} (2 + h(0,1) + h(1,0))$.  Hence, $h'(0,0) = h'(1,1) >
h'(0,1) = h'(1,0)$.  So $f_{\ne} = \alpha_1 h' + \alpha_2$ for some
$\alpha_1 \in \mathbb{Q}_{\ge 0}$ and $\alpha_2 \in \mathbb{Q}$.  \qed
\end{proof}


\subsection {The Broader Picture}
\label{sec:broadpicture}

Theorems \ref{theorem:easiestwmo} and~\ref{theorem:easiestvcsp} does
not describe the relative complexity between the $\SAT(\cdot)$, {\sc
  Max-Ones}$(\cdot)$ and $\VCSP(\cdot)$ problems. However we readily
see (1) that $\SAT(\Rel{II_2}) \reduces \WMO(\Rel{II_2})$, and (2)
that $\WMO(\Rel{II_2}) \reduces \Wprefix${\sc{-Max Independent Set}}
since $\Wprefix${\sc{-Max Independent Set}} can be expressed by
$\WMO(\Nand^2)$. The problem {\sc W-Max-Ones}$(\Nand^2)$ is in turn
expressible by $\MAXCSP(\{\Nand^2,$ $\T,\F\})$.  To show that
$\Wprefix${\sc{-Max Independent Set}} $\reduces \VCSP(f_{\ne})$ it is
in fact, since $\MAXCSP(\neqq)$ and $\VCSP(f_{\ne})$ is the same
problem, sufficient to show that {\sc Max-CSP}$(\{\Nand^2,\T,\F\})
\reduces \MAXCSP(\neqq)$.  We do this as follows.  Let $v_0$ and $v_1$
be two global variables.  We force $v_0$ and $v_1$ to be mapped to
different values by assigning a sufficiently high weight to the
constraint $\neqq(v_0,v_1)$.  It then follows that $\T(x) =
\neqq(x,v_0)$, $\F(x) = \neqq(x,v_1)$ and $\Nand^2(x,y) = \frac{1}{2}
( \neqq(x,y) + \F(x) + \F(y) )$ and we are done. It follows from this
proof that $\MAXCSP(\{\Nand^2,\T,\F\})$ and $\VCSP(f_{\ne})$ are
mutually CV-interreducible.  Since $\MAXCSP(\{\Nand^2,\T,\F\})$ can
also be formulated as a $\VCSP$ it follows that $\VCSP(\cdot)$ does
not have a unique easiest set of cost functions. The complexity
results are summarized in Figure~\ref{figure:compres}. Some trivial
inclusions are omitted in the figure: for example it holds that $\SAT(\Gamma)
\reduces \WMO(\Gamma)$ for all $\Gamma$.

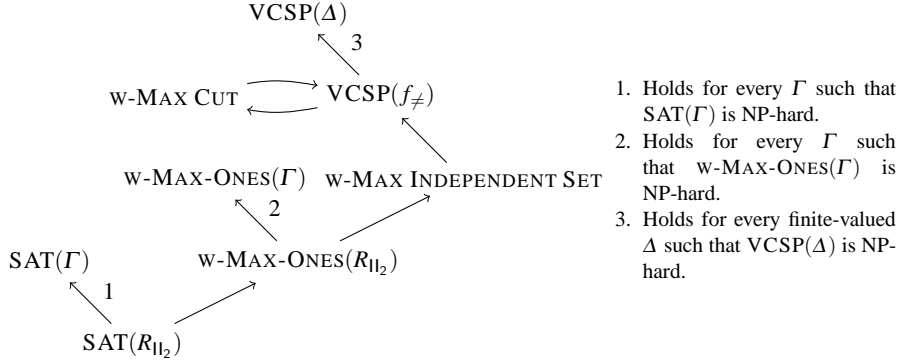
\begin{figure}
\begin{minipage}[c]{0.65\textwidth}
    \begin{tikzpicture}[->,scale=1.1]



        \node (cspa) at (2,0) {$\SAT(\Rel{II_2})$};
        \node (cspb) at (1,1) {$\SAT(\Gamma)$};

        \node (maxonesa) at (4,1) {$\WMO(\Rel{II_2})$};
        \node (maxonesb) at (3,2) {$\WMO(\Gamma)$};

        \node (misa) at (6,2) {$\Wprefix$-{\sc{Max Independent Set}}};
        \node (maxcut) at (5,3) {$\VCSP(f_{\ne})$};
        \node (vcsp) at (4,4) {$\VCSP(\Delta)$};

        \node (maxcutb) at (2.5,3) {$\Wprefix$-{\sc{Max Cut}}};

        \path (cspa) edge node [above right, pos=0.4] {1} (cspb);
        \draw (cspa) to (maxonesa);

        \path (maxonesa) edge node [above right, pos=0.4] {2} (maxonesb);

        \draw (maxonesa) to (misa);
        \draw (misa) to (maxcut);


        \draw (maxcutb) to[out=10,in=170] (maxcut);
        \draw (maxcut) to[out=190,in=-10] (maxcutb);



        \path (maxcut) edge node [above right, pos=0.4] {3} (vcsp);
    \end{tikzpicture}
\end{minipage}%
\begin{minipage}[c]{0.32\textwidth}
\scalebox{0.9}{
\begin{minipage}[c]{1.1111\textwidth}
\begin{enumerate}
\item
 Holds for every $\Gamma$ such that $\SAT(\Gamma)$ is NP-hard.
\item
Holds for every $\Gamma$ such that $\WMO(\Gamma)$ is NP-hard.
\item
Holds for every finite-valued $\Delta$ such that $\VCSP(\Delta)$ is NP-hard.
\end{enumerate}
\end{minipage}
}
\end{minipage}
\caption{The complexity landscape of some Boolean optimization and satisfiability
  problems. A directed arrow from one node $A$ to $B$ means that $A \reduces B$.}
\label{figure:compres}
\end{figure}

\section{Subexponential Time and the Exponential-Time Hypothesis} \label{section:eth}
The exponential-time hypothesis states that $3$-$\SAT \notin \SE$
\cite{impagliazzo2001b}. We remind the reader that the ETH can be
based on different size parameters (such as the number of variables or
the number of clauses) and that these different definitions often
coincide~\cite{impagliazzo2001}.  In this section we investigate the
consequences of the ETH for the $\MO$ and $\UWVCSP$ problems. A direct
consequence of Section~\ref{section:easy_problems} is that if there
exists any finite constraint language $\Gamma$ or set of cost
functions $\Delta$ such that $\WMO(\Gamma)$ or $\VCSP(\Delta)$ is
NP-hard and in $\SE$, then $\SAT(\Rel{II_2})$ is in $\SE$ which
implies that the ETH is false~\cite{Jonsson:etal:soda2013}.
The other direction is interesting too since it highlights the
likelihood of subexponential time algorithms for the problems,
relative to the ETH.

\begin{lemma} \label{lemma:ethmo}
  If $\MO(\Gamma)$ is in $\SE$ for some finite constraint 
 languages $\Gamma$ such that $\MO(\Gamma)$ is NP-hard, then the
 ETH is false.
\end{lemma}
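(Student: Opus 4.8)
The plan is to produce an LV-reduction from $\SAT(\Rel{II_2})$ to $\MO(\Gamma)$; since $\SAT(\Rel{II_2}) \in \SE$ implies that the ETH is false~\cite{Jonsson:etal:soda2013} and LV-reductions preserve membership in $\SE$, this gives the lemma at once. By the sparsification lemma~\cite{impagliazzo2001} I may restrict attention to \emph{sparse} instances of $\SAT(\Rel{II_2})$, i.e.\ those with $O(n)$ constraints. By Theorem~\ref{theorem:umaxones}, the assumptions that $\MO(\Gamma)$ is NP-hard and lies in $\SE$ yield $\MO(R) \reduces \MO(\Gamma)$, and hence $\MO(R) \in \SE$, for some $R$ among the seven weak bases $R_{\cc{IS}{2}{1}}, \Rel{II_2}, R_{\cc{IN}{}{2}}, R_{\cc{IL}{}{0}}, R_{\cc{IL}{}{2}}, R_{\cc{IL}{}{3}}, R_{\cc{ID}{}{2}}$. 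It thus suffices to LV-reduce sparse $\SAT(\Rel{II_2})$ to $\MO(R)$ for each of these seven relations.

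Two of them, $\Rel{II_2}$ and $R_{\cc{IN}{}{2}}$, already give an NP-hard satisfiability problem ($R_{\cc{IN}{}{2}}$ is readily checked from its tuples to be neither $0$- nor $1$-valid, Horn, dual-Horn, affine, nor bijunctive, so $\SAT(R_{\cc{IN}{}{2}})$ is NP-hard by Schaefer's theorem~\cite{Schaefer78}). For these I route through $\SAT$: because $\Rel{II_2}$ is the easiest NP-hard $\SAT(\cdot)$ language we have $\SAT(\Rel{II_2}) \reduces \SAT(R)$~\cite{Jonsson:etal:soda2013}, and $\SAT(R) \reduces \MO(R)$ is the trivial reduction from feasibility to Max-Ones. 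Both are CV-reductions, so $\MO(R) \in \SE$ already forces $\SAT(\Rel{II_2}) \in \SE$.

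The five remaining relations are the new content: each has a tractable $\SAT(R)$ (max independent set for $R_{\cc{IS}{2}{1}}$, a bijunctive language for $R_{\cc{ID}{}{2}}$, affine languages for $R_{\cc{IL}{}{0}}, R_{\cc{IL}{}{2}}, R_{\cc{IL}{}{3}}$), so the detour through $\SAT$ is closed off. Here I reuse the q.w.p.p.\ identities of Theorem~\ref{theorem:easiestwmo}, each of which writes $\Rel{II_2}$ as the $\argmax$ of a $0/1$-weighted sum of variables subject to $R$-constraints. Reading such an identity as a gadget turns a sparse instance of $\SAT(\Rel{II_2})$ into a $\WMO(R)$ instance whose optimum meets a fixed, easily computed target precisely when the original instance is satisfiable. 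Crucially the weight accumulated on a variable equals its degree, so for sparse instances the total weight is $O(n)$. The concluding step is to \emph{unweight}: a positive integer weight is simulated by that many $\eq$-tied copies, and each weight-$0$ variable is neutralised by a constant-size gadget that makes its contribution independent of its value. The added variables are proportional to the total degree, hence $O(n)$, so the whole composition is an LV-reduction.

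The heart of the argument, and the step I expect to be hardest, is this unweighting for the five tractable-$\SAT$ relations. Neutralising a weight-$0$ variable is easy when $\neqq$ is q.p.p.-definable in $R$ — pair it with a fresh complementary variable — which covers $R_{\cc{ID}{}{2}}, R_{\cc{IL}{}{2}}, R_{\cc{IL}{}{3}}$; for $R_{\cc{IS}{2}{1}}$ the objective already sums over all variables, so no weight-$0$ variable occurs. The awkward case is $R_{\cc{IL}{}{0}}$, where $\neqq \notin \cclone{R_{\cc{IL}{}{0}}}$ because an even XOR sends $\neqq$ to the diagonal; there I would attach to each weight-$0$ variable $y$ two fresh variables $a,b$ together with the ternary relation $y \oplus a \oplus b = 0 \in \cclone{R_{\cc{IL}{}{0}}}$, so that an optimal choice of $a,b$ contributes exactly $2$ whether $y$ is $0$ or $1$. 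Checking that all these substitutions respect the ``optimum $=$ target iff satisfiable'' equivalence, and that they are realisable by q.p.p.\ definitions over each weak base so that the final map is an honest LV-reduction, is where the real work lies.
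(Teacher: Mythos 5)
Your global strategy does coincide with the paper's: combine Theorem~\ref{theorem:umaxones} with LV-reductions from a linearly sparse version of $\SAT(\Rel{II_2})$ into $\MO(R)$ for each candidate weak base $R$, and your treatment of the two bases whose $\SAT$ problem is already NP-hard is correct --- indeed slicker than the paper's, which proves Lemma~\ref{mored3} by an explicit gadget where you get both $\Rel{II_2}$ and $\Rel{IN_2}$ for free from $\SAT(\Rel{II_2}) \reduces \SAT(R) \reduces \MO(R)$. (A small caveat: the sparsification lemma of~\cite{impagliazzo2001} is stated for $k$-CNF, not for $\SAT(\Rel{II_2})$; you should either pass through $8$-CNF and back via p.p.\ definitions, or simply cite, as the paper does, the equivalence of the ETH with $\SAT(\Rel{II_2})$-$2 \notin \SE$ from~\cite{Jonsson:etal:soda2013}.)

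The genuine gap sits in the unweighting step for the five bases with tractable $\SAT(R)$, and it is not merely ``real work'' but a device that cannot be implemented. Simulating a weight-$w$ variable by $w$ copies tied by $\eq$ presupposes that equality of two distinct variables can be enforced inside an $\MO(R)$ instance. For $R = \Rel{IS^2_1}$ this is impossible in principle: every constraint built from $\Rel{IS^2_1}$ is a $\Nand$ plus a forced zero, so every definable solution set is downward closed, and a downward-closed set containing a point with $x=y=1$ also contains the point with $y$ flipped to $0$; no gadget, with or without auxiliary variables, forces $x=y$. For $R \in \{\Rel{ID_2}, \Rel{IL_2}, \Rel{IL_3}\}$ equality can be enforced, but only through intermediate complement variables (every application of these relations pairs coordinates with their negations), and the intermediates cancel exactly the amplification you need: a chain of $w$ copies with $w-1$ complements contributes $wa + (w-1)(1-a)$ when the underlying variable takes value $a$, a net difference of $1$ rather than $w$; only $\Rel{IL_0}$ admits a clean tie via $x+y+c_0$ even. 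Nor does the ``free'' $\eq$ allowed in q.p.p.\ definitions help: in the $\CSP$ and $\WMO$ settings it is discharged by identifying variables, which is precisely the operation that collapses your copies in the unweighted problem, so $\MO(R \cup \{\eq\}) \in \SE$ does not follow from $\MO(R) \in \SE$. Once weights are distorted, the ``optimum meets the target iff satisfiable'' equivalence breaks; in fact for $\Rel{IS^2_1}$ it breaks already for a single gadget, since $(0,0,0,1,1,1,0,1)$ satisfies the six $\Nand$ constraints of the identity you want to reuse and attains the same unweighted value $4$ as the three tuples of $\Rel{II_2}$. This is exactly the point where the paper's proof takes a different shape: Lemmas~\ref{mored1}--\ref{mored5} never amplify weights, but instead neutralize all shared SAT-variables by complement pairing and place the objective on fresh per-constraint counting variables, and for $\Rel{IS^2_1}$ the paper abandons variable copies altogether in favour of the independent-set encoding --- three vertices per constraint, one per satisfying tuple, with conflict constraints between incompatible vertices --- in which the SAT-variables do not occur at all.
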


\begin{proof}
  From Jonsson et al.$\,$~\cite{Jonsson:etal:soda2013} it follows that
  $3$-$\SAT$ is in $\SE$ if and only if $\SAT(\Rel{II_2})$-$2$ is
  in $\SE$. Combining this with 
  Theorem~\ref{theorem:umaxones} we only have to prove that
  $\SAT(\Rel{II_2})$-$2$ LV-reduces to $\MO(R)$ for $R \in
  \{R_{\cc{IS}{2}{1}}, R_{\cc{IN}{}{2}}, R_{\cc{IL}{}{0}},
  R_{\cc{IL}{}{2}}, R_{\cc{IL}{}{3}}, R_{\cc{ID}{}{2}}\}$.  We provide
  an illustrative reduction from $\SAT(\Rel{II_2})$-$2$ to
  $\MO(\Rel{IS^2_1})$; the remaining reductions are 
  presented in Lemmas~\ref{mored1}--\ref{mored5} in Appendix~\ref{appendix:eth}. 
  Since $\Rel{IS^2_1}$ is the $\Nand$ relation with
  one additional constant column, the $\MO(\Rel{IS^2_1})$ problem is
  basically the maximum independent set problem or, equivalently,
  the maximum clique problem in the complement graph. Given an instance $I$ of $\CSP(\Rel{II_2})$-$2$ we
  create for every constraint $3$ vertices, one corresponding to each
  feasible assignment of values to the variables occurring in the
  constraint.  We add edges between all pairs of vertices that
  are not inconsistent and that do not correspond to the same
  constraint.
  The instance $I$ is satisfied if and only if there is a clique of
  size $m$ where $m$ is the number of constraints in $I$.  Since $m
  \leq 2n$ this implies that the number of vertices is $\le
  2n$. 
\qed
\end{proof}

\begin{theorem} \label{theorem:eth}
  The following statements are equivalent.
\begin{enumerate}
\item \label{i1}
  The exponential-time hypothesis is false.
\item \label{i4}
  $\MO(\Gamma) \in \SE$ for every finite $\Gamma$.
\item \label{i5}
  $\MO(\Gamma) \in \SE$ for some finite $\Gamma$ such that $\MO(\Gamma)$ is NP-hard.
\item \label{i6}
  $\UWVCSP(\Delta)_d \in \SE$ for every finite set of finite-valued
  cost functions $\Delta$ and $d \geq 0$.
\end{enumerate}
\end{theorem}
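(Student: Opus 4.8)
The plan is to establish the cycle $(\ref{i1}) \Rightarrow (\ref{i4}) \Rightarrow (\ref{i5}) \Rightarrow (\ref{i1})$, giving the equivalence of the first three items, and then to fold in item~(\ref{i6}) by proving $(\ref{i1}) \Rightarrow (\ref{i6})$ and $(\ref{i6}) \Rightarrow (\ref{i1})$. Two links are immediate: $(\ref{i4}) \Rightarrow (\ref{i5})$ holds because NP-hard $\MO(\Gamma)$ problems exist (e.g.\ $\Gamma = \{\Rel{II_2}\}$, via Theorem~\ref{theorem:maxones}), and $(\ref{i5}) \Rightarrow (\ref{i1})$ is precisely Lemma~\ref{lemma:ethmo}. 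Every remaining implication begins from the assumption that the ETH is false, i.e.\ that $3$-$\SAT \in \SE$ \cite{impagliazzo2001b}, and the task is to convert a subexponential $3$-$\SAT$ algorithm into subexponential algorithms for the optimization problems. The obstacle to a naive reduction is that an $\MO(\Gamma)$ instance on $n$ variables may carry $\Theta(n^{\arity})$ constraints, so a clause-by-clause encoding into $\SAT$ would introduce a superlinear number of fresh variables and break subexponentiality.

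I would first dispatch the easier valued-constraint side. For $(\ref{i1}) \Rightarrow (\ref{i6})$ the key point is that $\UWVCSP_d(\Delta)$ is, by definition, restricted to instances with at most $d\,|\var(I)|$ terms, so no sparsification is needed. Working with the decision (threshold-$K$) version, I would scale the finitely many rational values taken by the functions of $\Delta$ to integers, encode the value of each of the $\le dn$ terms with a constant number of Boolean variables, and sum them with a balanced tree of full adders using only $O(n)$ auxiliary variables; comparing the resulting $O(\log n)$-bit sum with $K$ and flattening all gadgets into $3$-clauses yields an LV-reduction to $3$-$\SAT$ on $O(n)$ variables. Letting $K$ range over the $O(n)$ attainable cost values and invoking $3$-$\SAT \in \SE$ then gives $\UWVCSP_d(\Delta) \in \SE$. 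For the converse $(\ref{i6}) \Rightarrow (\ref{i1})$ I would use that, by Jonsson et al.\ \cite{Jonsson:etal:soda2013}, $3$-$\SAT \in \SE$ iff $\SAT(\Rel{II_2})$-$2 \in \SE$; since $\SAT(\Rel{II_2})$-$2$ is bounded-occurrence it has only $O(n)$ constraints, and expressing it as $\MINCSP(\Rel{II_2})$ (cost function $0$ on $\Rel{II_2}$ and $1$ elsewhere, with threshold $0$) exhibits it as an instance family of $\UWVCSP_d(\{f\})$ for a suitable constant $d$; hence $(\ref{i6})$ forces $\SAT(\Rel{II_2})$-$2 \in \SE$ and thus the ETH to fail.

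The main work is the implication $(\ref{i1}) \Rightarrow (\ref{i4})$. If $\MO(\Gamma)$ is in P it is trivially in $\SE$, so by Theorem~\ref{theorem:maxones} I may assume $\MO(\Gamma)$ is NP-hard. I would view a $\CSP(\Gamma)$ instance on $n$ variables as a bounded-width CNF on the same $n$ variables and apply the sparsification lemma \cite{impagliazzo2001}: for any $\epsilon > 0$ it writes the feasible region as a union of the feasible regions of $2^{\epsilon n}$ instances, each with only $O(n)$ constraints. Because these solution sets cover the original solution set exactly, the maximum number of ones over all feasible assignments equals the maximum, over the $2^{\epsilon n}$ sparse instances, of the optimum of the corresponding sparse $\MO$ problem. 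Each sparse instance, in threshold form, LV-reduces to $3$-$\SAT$: the $O(n)$ bounded-width constraints flatten into $O(n)$ $3$-clauses, and the cardinality requirement that at least $t$ of the $n$ original variables are true is enforced by summing those variables with an $O(n)$-gate binary adder and comparing to $t$, again using only $O(n)$ auxiliary variables. Solving each of the $2^{\epsilon n}$ sparse instances for each of the $O(n)$ thresholds costs $2^{\delta\,O(n)}$ by $3$-$\SAT \in \SE$, for a total of $2^{\epsilon n}\cdot \operatorname{poly}(n)\cdot 2^{\delta\,O(n)}$; choosing $\epsilon$ and $\delta$ small enough makes this $2^{\gamma n}$ for any prescribed $\gamma > 0$, so $\MO(\Gamma) \in \SE$.

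The step I expect to be the crux is the sparsification in $(\ref{i1}) \Rightarrow (\ref{i4})$: one must argue that the sparsification lemma, although ordinarily stated for satisfiability, yields a decomposition whose sparse solution sets union up to the original solution set, so that the max-ones objective is preserved across the branches, and one must keep the cardinality encoding strictly linear in the number of variables (a binary-adder encoding rather than a unary counter, whose $\Theta(nt)$ blow-up would be fatal). By contrast the valued-constraint implications are comparatively routine precisely because $\UWVCSP_d(\Delta)$ is already sparse, so this delicate sparsification step is bypassed there.
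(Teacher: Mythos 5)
Your proof is correct, but for the two ``algorithmic'' implications it takes a genuinely different route from the paper. The paper also establishes the cycle $\ref{i1} \Rightarrow \ref{i4} \Rightarrow \ref{i5} \Rightarrow \ref{i1}$ (using Lemma~\ref{lemma:ethmo} for $\ref{i5} \Rightarrow \ref{i1}$, exactly as you do) and its argument for $\ref{i6} \Rightarrow \ref{i1}$ is essentially identical to yours (the $0/1$-valued indicator cost function of $\Rel{II_2}$ applied to $\SAT(\Rel{II_2})$-$2$ instances). The divergence is in the forward directions. For $\ref{i1} \Rightarrow \ref{i4}$ the paper does not sparsify: its Lemma~\ref{lemma:ethmaxones} invokes the result of Impagliazzo et al.~\cite{impagliazzo2001} that $3$-$\SAT$ is complete for \emph{size-constrained monadic SNP} under size-preserving SERF reductions, and then simply exhibits $\MO(\Gamma)$ as a member of that class via the formula $\exists S, |S|\geq K \, . \, F$; the size-constrained quantifier absorbs the cardinality objective, so no adder circuits or threshold encodings are needed. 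For item~\ref{i6} the paper proves $\ref{i4} \Rightarrow \ref{i6}$ (Lemma~\ref{lemma:ethvcsp}) by an explicit combinatorial LV-reduction from $\UWVCSP_d(\Delta)$ to an unweighted \emph{minimum}-ones problem --- encoding each cost value in unary via a purpose-built relation $R_f$ and equality chains --- and then reducing min-ones to $\MO$ with inequality constraints, rather than your direct reduction of $\ref{i1} \Rightarrow \ref{i6}$ to $3$-$\SAT$ via integer scaling and adder trees. The trade-off: the paper's route black-boxes the delicate interaction between sparsification and the objective function inside the IPZ completeness theorem, at the cost of importing that heavier framework; your route is more elementary and self-contained, but its correctness really does hinge on the two points you flagged --- that the sparsification lemma gives logical equivalence $\Phi \equiv \bigvee_i \Psi_i$ (true for the IPZ construction, whose branching preserves solution sets exactly, so the max-ones optimum commutes with the disjunction), and that the cardinality test is encoded in $O(n)$ auxiliary variables via a binary adder tree rather than a unary counter. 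Both routes ultimately rest on the same source~\cite{impagliazzo2001}, yours on its sparsification lemma, the paper's on its size-constrained SNP completeness theorem (which is itself proved by sparsification).
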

\begin{proof}
The implication $\ref{i1} \Rightarrow \ref{i4}$ follows from
Lemma~\ref{lemma:ethmaxones} in Appendix~\ref{appendix:eth}, $\ref{i4}
\Rightarrow \ref{i5}$ is trivial, and $\ref{i5} \Rightarrow \ref{i1}$
follows by Lemma~\ref{lemma:ethmo}.  The implication $\ref{i4}
\Rightarrow \ref{i6}$ follows from Lemma~\ref{lemma:ethvcsp} in
Appendix~\ref{appendix:eth}.  We finish the proof by showing $\ref{i6}
\Rightarrow \ref{i1}$.  Let $I=(V,C)$ be an instance of
$\SAT(\Rel{II_2})$-$2$.  Note that $I$ contains at most $2\,|V|$
constraints.  Let $f$ be the function defined by $f(\tup{x})=0$ if
$\tup{x} \in \Rel{II_2}$ and $f(\tup{x})=1$ otherwise.  Create an
instance of $\UWVCSP_2(f)$ by, for every constraint $C_i =
\Rel{II_2}(x_1,\dots,x_8) \in C$, adding to the cost function the term
$f(x_1,\dots,x_8)$.  This instance has a solution with objective value
$0$ if and only if $I$ is satisfiable.  Hence, $\SAT(\Rel{II_2})$-$2
\in \SE$ which contradicts the ETH~\cite{Jonsson:etal:soda2013}. \qed
\end{proof}

\section{Future Research}
  {\bf Other problems.} 
  The weak base method naturally lends itself to other problems
  parameterized by constraint languages. In general, one has to 
  consider all co-clones where the problem is NP-hard, take the
  weak bases for these co-clones and find out which of these are
  CV-reducible to the other cases. 
The last step is typically the most challenging --- this was
demonstrated by the $\MO$ problems where we had to introduce
q.w.p.p. implementations.
An example of an interesting problem where this strategy works is
the {\em
    non-trivial} $\SAT$ problem ($\SAT^*(\Gamma)$), i.e.\ the problem
  of deciding whether a given instance has a solution in which not all
  variables are mapped to the same value. This problem is NP-hard in
  exactly six cases~\cite{CrHe97} and by following the aforementioned
  procedure one can prove that the relation $\Rel{II_2}$ results in
  the easiest NP-hard $\SAT^*(\Gamma)$ problem. Since
  $\SAT^*(\Rel{II_2})$ is in fact the same problem as
  $\SAT(\Rel{II_2})$ this shows that restricting solutions to
  non-trivial solutions does not make the satisfiability problem
  easier. This result can also be extended to the co-NP-hard {\em
    implication problem}~\cite{CrHe97} and we believe that similar methods can
  also be applied to give new insights into the complexity of
  e.g. {\em enumeration}, which also follows the same complexity classification~\cite{CrHe97}.
 Such results would naturally give us insights into the structure of NP but also into the applicability of clone-based methods.

\smallskip

\noindent
{\bf Weighted versus unweighted problems.} 
Theorem~\ref{theorem:eth} only applies to unweighted
problems and
lifting these results to the weighted case does
not appear straightforward.
We believe that some of these obstacles could be overcome with
generalized sparsification techniques. 
We provide an example by proving that
if any NP-hard $\WMO(\Gamma)$ problem is
in $\SE$, then $\MAXCUT$ can be approximated 
within a multiplicative error of $(1 \pm \epsilon)$ (for any $\epsilon > 0$) in subexponential time.
Assume that $\WMO(\Gamma)$ is NP-hard and a member of $\SE$, and arbitrarily choose $\epsilon > 0$.
Let $\MAXCUT_c$ be the $\MAXCUT$ problem restricted to graphs $G =
(V,E)$ where $|E| \leq c \cdot |V|$. 
We first prove that $\MAXCUT_c$ is in $\SE$ for arbitrary $c \geq 0$.
By Theorem~\ref{theorem:easiestwmo}, we infer that
$\WMO(\Rel{II_2})$ is in $\SE$. Given an instance $(V,E)$
of $\MAXCUT_c$, one can introduce one fresh variable $x_v$ for each
$v \in V$ and one fresh variable $x_e$ for each edge $e \in E$. For each
edge $e = (v,w)$, we then constrain the
variables $x_v,x_w$ and $x_e$ as $R(x_v,x_w,x_e)$ where $R =
\{(0,0,0),(0,1,1),(1,0,1),(1,1,0)\} \in \cclone{\Rel{II_2}}$. It can
then be verified that, for an optimal solution $h$, that the maximum value of $\sum_{e \in E} w_e h(x_e)$
(where $w_e$ is the weight associated with the edge $e$) equals the
weight of a maximum cut in $(V,E)$. This is an LV-reduction 
since $|E| = c \cdot |V|$.
Now consider an instance
$(V,E)$ of the unrestricted $\MAXCUT$ problem. By Batson et al.~\cite{batson2012},
 we can (in polynomial time) compute a
{\em cut sparsifier} $(V',E')$ with only $D_{\epsilon} \cdot
n/\epsilon^2$ edges (where $D_{\epsilon}$ is a constant depending only on
$\epsilon$), which approximately preserves the value of the
maximum cut of $(V,E)$ to within a multiplicative error of $(1 \pm
\epsilon)$. By using the LV-reduction above from $\MAXCUT_{D_{\epsilon}/\epsilon^2}$
to $\WMO(\Gamma)$, it follows
that we can approximate the maximum cut of $(V,E)$ within $(1 \pm \epsilon)$
in subexponential time.


\bibliography{references}
\bibliographystyle{plain}

\newpage
\appendix
\section{Appendix}


\subsection{Bases of Boolean Clones and the Clone Lattice}\label{appendix:clones}
In Table~\ref{table:clones} we present a full table of bases for all
Boolean clones. These were first introduced by Post~\cite{pos41} and
the lattice is hence known as {\em Post's lattice}.
It is visualized in Figure~\ref{figure:clones}.

\begin{table*} \scriptsize
\caption{\label{table:clones}%
      List of all Boolean clones with definitions and bases, where
      $\id(x) = x$ and $h_n(x_1, \ldots, x_{n+1}) =
      \bigvee^{n+1}_{i=1}x_1 \cdots x_{i-1}x_{i+1} \cdots x_{n+1}$,
      $\dual(f)(a_1, \ldots, a_n) = 1 - f(\overbar{a_1}, \ldots,
      \overbar{a_n})$.
      }
\begin{tabularx}{\textwidth}{l l l}
  \hline
      Clone & Definition & Base \\
      \hline
      $\CloneBF$ & All Boolean functions & $\{x \land y, \neg x\}$ \\
      $\CloneR_0$ & $\{f \mid f \text{ is $0$-reproducing}\}$ & $\{x \land y, x \oplus y\}$ \\
      $\CloneR_1$ & $\{f \mid f \text{ is $1$-reproducing}\}$ & $\{x \lor y, x \oplus y \oplus 1 \}$ \\
      $\CloneR_2$ & $\CloneR_0 \cap \CloneR_1$ & $\{x \lor y, x \land (y \oplus z \oplus 1) \}$ \\
      $\CloneM$ & $\{f \mid f \text{ is monotonic}\}$ & $\{x \lor y, x \land y, 0, 1\}$ \\
      $\CloneM_1$ & $\CloneM \cap \CloneR_1$ & $\{x \lor y, x \land y, 1\}$ \\
      $\CloneM_0$ & $\CloneM \cap \CloneR_0$ & $\{x \lor y, x \land y, 0\}$ \\
      $\CloneM_2$ & $\CloneM \cap \CloneR_2$ & $\{x \lor y, x \land y \}$ \\
      $\CloneS^n_{0}$ & $\{f \mid f \text{ is $0$-separating of degree } n\}$ & $\{x \to y, \dual(h_n)\}$ \\
      $\CloneS_0$ & $\{f \mid f \text{ is $0$-separating}\}$ & $\{x \to y\}$ \\
      $\CloneS^n_{1}$ & $\{f \mid f \text{ is $1$-separating of degree } n\}$ & $\{x \wedge \neg y, h_n\}$ \\
      $\CloneS_1$ & $\{f \mid f \text{ is $1$-separating}\}$ & $\{x \wedge \neg y\}$ \\
      $\CloneS^n_{02}$ & $\CloneS^n_0 \cap \CloneR_2$ & $\{x \lor  (y  \land  \neg z), \dual(h_n)\}$ \\
      $\CloneS_{02}$ & $\CloneS_0 \cap \CloneR_2$ & $\{x \lor  (y  \land  \neg z)\}$ \\
      $\CloneS^n_{01}$ & $\CloneS^n_0 \cap \CloneM$ & $\{\dual(h_n),1\}$ \\
      $\CloneS_{01}$ & $\CloneS_0 \cap \CloneM$ & $\{x \lor  (y  \land z), 1\}$ \\
      $\CloneS^n_{00}$ & $\CloneS^n_0 \cap \CloneR_2 \cap \CloneM$ & $\{x \lor  (y  \land  z), \dual(h_n)\}$ \\
      $\CloneS_{00}$ & $\CloneS_0 \cap \CloneR_2 \cap \CloneM$ & $\{x \lor  (y  \land  z)\}$ \\
      $\CloneS^n_{12}$ & $\CloneS^n_1 \cap \CloneR_2$ & $\{x \land  (y  \lor  \neg z), h_n\}$ \\
      $\CloneS_{12}$ & $\CloneS_1 \cap \CloneR_2$ & $\{x \land  (y  \lor  \neg z)\}$ \\
      $\CloneS^n_{11}$ & $\CloneS^n_1 \cap \CloneM$ & $\{h_n, 0\}$ \\
      $\CloneS_{11}$ & $\CloneS_1 \cap \CloneM$ & $\{x \land  (y  \lor  z), 0\}$ \\
      $\CloneS^n_{10}$ & $\CloneS^n_1 \cap \CloneR_2 \cap \CloneM$ & $\{x \land  (y  \lor  z), h_n\}$ \\
      $\CloneS_{10}$ & $\CloneS_1 \cap \CloneR_2 \cap \CloneM$ & $\{x \land  (y  \lor  z)\}$ \\
      $\CloneD$ & $\{f \mid f \text{ is self-dual}\}$ & $ \{(x \land \neg y) \lor (x \land \neg z) \lor (\neg y \land  \neg z)\}$ \\
      $\CloneD_1$ & $\CloneD \cap \CloneR_2$ & $ \{(x \land y) \lor (x \land \neg z) \lor (y \land \neg z)\}$ \\
      $\CloneD_2$ & $\CloneD \cap \CloneM$ & $ \{h_2\}$ \\ 
      $\CloneL$ & $\{f \mid f \text{ is affine}\}$ & $\{ x \oplus y,1\}$ \\
      $\CloneL_0$ & $\CloneL \cap \CloneR_0$ & $\{x \oplus y \}$ \\
      $\CloneL_1$ & $\CloneL \cap \CloneR_1$ & $\{x \oplus y  \oplus 1 \}$ \\
      $\CloneL_2$ & $\CloneL \cap \CloneR_2$ & $\{x \oplus y \oplus z\}$ \\
      $\CloneL_3$ & $\CloneL \cap \CloneD$ & $\{x  \oplus  y  \oplus  z  \oplus  1\}$ \\
      $\CloneV$ & $\{f \mid  f $ is a disjunction or constants$\}$ & $\{ x \lor y, 0,1 \}$ \\
      $\CloneV_0$ & $\CloneV \cap \CloneR_0$ & $\{ x \lor y, 0\}$ \\
      $\CloneV_1$ & $\CloneV \cap \CloneR_1$ & $\{ x \lor y, 1\}$ \\
      $\CloneV_2$ & $\CloneV \cap \CloneR_2$ & $\{ x \lor y\}$ \\
      $\CloneE$ & $\{f \mid  f $ is a conjunction or constants$\}$ & $\{ x \land y, 0, 1 \}$ \\
      $\CloneE_0$ & $\CloneE \cap \CloneR_0$ & $\{ x \land y, 0\}$ \\
      $\CloneE_1$ & $\CloneE \cap \CloneR_1$ & $\{ x \land y, 1\}$ \\
      $\CloneE_2$ & $\CloneE \cap \CloneR_2$ & $\{ x \land y\}$ \\
      $\CloneN$ & $\{f \mid f $ depends on at most one variable$\}$ & $\{ \neg x,0,1\}$ \\
      $\CloneN_2$ & $\CloneN \cap \CloneR_2$ & $\{ \neg x\}$ \\
      $\CloneI$ & $\{f \mid f \text{ is a projection or a constant}\}$ & $\{\id, 0,1\}$ \\
      $\CloneI_0$ & $\CloneI \cap \CloneR_0$ & $\{\id, 0\}$ \\
      $\CloneI_1$ & $\CloneI \cap \CloneR_1$ & $\{\id, 1\}$ \\
      $\CloneI_2$ & $\CloneI \cap \CloneR_2$ & $\{\id\}$ \\
      \hline
    \end{tabularx}
  \end{table*}

\begin{figure}
\centering
\includegraphics[scale=0.6]{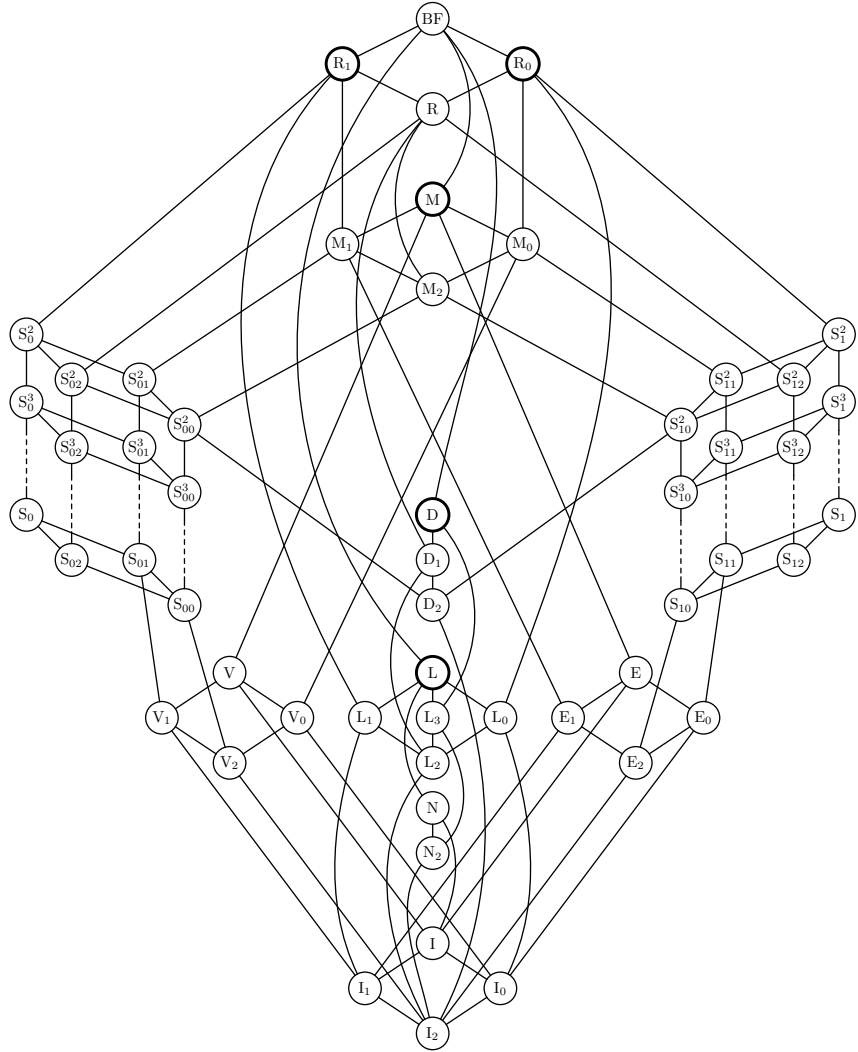}
\caption{The lattice of Boolean clones.}
\label{figure:clones}
\end{figure}

\subsection{Weak Bases} \label{appendix:weak_bases}
We extend the definition of a polymorphism and say that a partial
function $f$ is a {\em partial polymorphism} to a relation $R$ if $R$
is closed under $f$ for every sequence of tuples for which $f$ is
defined. A set of partial functions $\cc{F}{}{}$ is said to be a {\em
  strong partial clone} if it contains all (total and partial)
projection functions and is closed under composition of functions.  By
$\ppol(\Gamma)$ we denote the set of partial polymorphisms to the set
of relations $\Gamma$. Obviously sets of the form $\ppol(\Gamma)$
always form strong partial clones and again we have a Galois
connection between clones and co-clones.

\begin{theorem}~\cite{romov1981}
  Let $\Gamma$ and $\Gamma'$ be two sets of relations. Then
  $\pcclone{\Gamma} \subseteq \pcclone{\Gamma'}$ if and only if 
  $\ppol(\Gamma') \subseteq \ppol(\Gamma)$.
\end{theorem}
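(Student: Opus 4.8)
The plan is to deduce the biconditional from a single closure identity,
\begin{equation}\label{eq:planclosure}
\pcclone{\Gamma} = \mathrm{pInv}(\ppol(\Gamma)),
\end{equation}
where, for a set $F$ of partial operations, $\mathrm{pInv}(F)$ denotes the set of all relations closed under every $f \in F$. First I would record that $F \subseteq \ppol(\Gamma)$ holds exactly when every $f \in F$ is a partial polymorphism of every $R \in \Gamma$, which is the same as $\Gamma \subseteq \mathrm{pInv}(F)$; thus $(\ppol,\mathrm{pInv})$ is an antitone Galois connection. Granting~\eqref{eq:planclosure}, the theorem is then immediate. If $\ppol(\Gamma') \subseteq \ppol(\Gamma)$, applying the antitone map $\mathrm{pInv}$ gives $\pcclone{\Gamma} = \mathrm{pInv}(\ppol(\Gamma)) \subseteq \mathrm{pInv}(\ppol(\Gamma')) = \pcclone{\Gamma'}$. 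Conversely, if $\pcclone{\Gamma} \subseteq \pcclone{\Gamma'}$ and $f \in \ppol(\Gamma')$, then every $R \in \Gamma \subseteq \pcclone{\Gamma} \subseteq \pcclone{\Gamma'} = \mathrm{pInv}(\ppol(\Gamma'))$ is closed under $f$, so $f \in \ppol(\Gamma)$. Everything therefore rests on~\eqref{eq:planclosure}.

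For the soundness inclusion $\pcclone{\Gamma} \subseteq \mathrm{pInv}(\ppol(\Gamma))$ I would argue that $\mathrm{pInv}(\ppol(\Gamma))$ is a weak partial co-clone containing $\Gamma$. Containment is immediate, and closure under q.p.p.\ definitions reduces to checking that, for any fixed $F$, the set $\mathrm{pInv}(F)$ is closed under conjunction, under identification and permutation of coordinates, and contains $\eq$; each is a routine componentwise verification, using that the operations are applied coordinatewise and that any partial operation preserves the diagonal $\eq$. Since $\pcclone{\Gamma}$ is the least q.p.p.-closed set containing $\Gamma$, the inclusion follows.

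The completeness inclusion $\mathrm{pInv}(\ppol(\Gamma)) \subseteq \pcclone{\Gamma}$ is the main obstacle. Write $R = \{\tup{t}^1,\dots,\tup{t}^m\} \subseteq \B^n$ and let $\tup{c}_i = (\tup{t}^1_i,\dots,\tup{t}^m_i) \in \B^m$ be the $i$-th \emph{column}. Define $R^{\sharp}$ as the conjunction of all constraints $S(x_{i_1},\dots,x_{i_k})$ with $S \in \Gamma \cup \{\eq\}$ and $i_1,\dots,i_k \in \{1,\dots,n\}$ that are satisfied by every tuple of $R$, i.e.\ with $(\tup{t}^j_{i_1},\dots,\tup{t}^j_{i_k}) \in S$ for all $j$. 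Then $R^{\sharp} \in \pcclone{\Gamma}$ and $R \subseteq R^{\sharp}$, so it remains to show $R^{\sharp} \subseteq R$. Supposing not, fix $\tup{b} = (b_1,\dots,b_n) \in R^{\sharp} \setminus R$ and define a partial operation $f \colon \B^m \to \B$ with domain exactly $\{\tup{c}_1,\dots,\tup{c}_n\}$ by $f(\tup{c}_i) = b_i$; this is well defined, since equal columns $\tup{c}_i = \tup{c}_{i'}$ force the conjunct $\eq(x_i,x_{i'})$ into $R^{\sharp}$, whence $b_i = b_{i'}$ because $\tup{b} \in R^{\sharp}$. The heart of the argument is that $f \in \ppol(\Gamma)$: applying $f$ componentwise to tuples $\tup{d}^1,\dots,\tup{d}^m \in S$ is defined only when each argument column equals some $\tup{c}_{i(p)}$, and unwinding this shows that the projection of $R$ onto $(i(1),\dots,i(k))$ lies in $S$, so $S(x_{i(1)},\dots,x_{i(k)})$ is a conjunct of $R^{\sharp}$ and hence $f(\tup{d}^1,\dots,\tup{d}^m) = (b_{i(1)},\dots,b_{i(k)}) \in S$. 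Yet applying $f$ to the rows of $R$ yields $f(\tup{t}^1,\dots,\tup{t}^m) = (b_1,\dots,b_n) = \tup{b} \notin R$, so $f$ does not preserve $R$, contradicting $R \in \mathrm{pInv}(\ppol(\Gamma))$. Thus $R^{\sharp} = R$, and since $\B^n$ is finite a finite subconjunction already carves out $R$, giving $R \in \pcclone{\Gamma}$.

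The delicate point, and what separates this partial/q.p.p.\ setting from the classical total-operation theorem, is that the separating map $f$ is left \emph{partial}, with domain exactly the columns of $R$: because q.p.p.\ definitions forbid existential quantification, there is neither any need nor any room to extend $f$ to a total polymorphism, and the absence of projection is precisely mirrored by the absence of any requirement to define $f$ off the columns. I expect the verification that $f \in \ppol(\Gamma)$, together with the bookkeeping ensuring that repeated columns are governed by $\eq$, to be where the real care is needed; the rest is formal Galois-theoretic wrapping around~\eqref{eq:planclosure}.
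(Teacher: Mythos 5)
The paper does not actually prove this statement---it is imported as a black box from Romov (1981)---so there is no in-paper argument to compare against; your proposal has to stand on its own, and it does. Your proof is correct and is the standard (Geiger-style) argument for this Galois connection: the antitone pair $(\ppol,\mathrm{pInv})$ together with the closure identity $\pcclone{\Gamma}=\mathrm{pInv}(\ppol(\Gamma))$, whose completeness half you witness by a separating operation that is deliberately left \emph{partial}, with domain exactly the columns of $R$; your bookkeeping is sound, including the two points where care is genuinely needed (repeated columns are governed by $\eq$, which is available since q.p.p.\ definitions range over $\Gamma\cup\{\eq\}$, and the passage from the possibly infinite conjunction $R^{\sharp}$ to a finite one, which is legitimate because $\B^n$ is finite).

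One caveat, which is a matter of convention rather than a gap in your reasoning: both the closure identity and the theorem as literally stated require relations to be nonempty. The empty relation lies in $\mathrm{pInv}(F)$ for every set $F$ of partial operations (preservation is vacuous), but it need not be q.p.p.-definable; for instance, taking $\Gamma$ to consist of the empty unary relation makes $\ppol(\Gamma)$ the set of \emph{all} partial operations, so $\ppol(\Gamma')\subseteq\ppol(\Gamma)$ holds for every $\Gamma'$, yet $\pcclone{\Gamma}$ contains the empty relation while $\pcclone{\{\eq\}}$ does not (every conjunction of equalities is satisfiable). Your column construction silently assumes $R\neq\emptyset$ (otherwise $m=0$ and the separating operation degenerates), which is exactly the nonemptiness convention under which Romov's theorem is true and under which it is used in this paper; it would be worth one sentence in your write-up, but it does not affect the correctness of the argument for the languages considered here.
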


We define the {\em weak base} of a co-clone $\cc{IC}{}{}$ to be the
base of the smallest member of the interval $\clone{I}(\cc{IC}{}{}) =
\{\cc{ID}{}{} \mid \cc{ID}{}{} = \pcclone{\cc{ID}{}{}}$ and
$\cclone{\cc{ID}{}{}} = \cc{IC}{}{}\}$. Weak bases were first
introduced in Schnoor and Schnoor~\cite{schnoor2008a,schnoor2008b} but
their construction resulted in relations that were in many cases
exponentially larger than the plain bases with respect to arity. Weak
bases fulfilling additional minimality conditions was given in
Lagerkvist~\cite{Lagerkvist2014} using relational descriptions. By
construction the weak base of a co-clone is always a single relation.

\begin{theorem} [\cite{schnoor2008a}] \label{theorem:weak_base}
  Let $R_w$ be the weak base of some co-clone $\cc{IC}{}{}$. Then for any
  finite base $\Gamma$ of $\cc{IC}{}{}$ it holds that $R_w \in
  \pcclone{\Gamma}{}{}$.
\end{theorem}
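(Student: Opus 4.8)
The plan is to derive the statement directly from the definition of $R_w$ as a base of the least element of the interval $\clone{I}(\cc{IC}{}{})$, by exhibiting $\pcclone{\Gamma}$ itself as a member of that interval. Writing $\clone{W}$ for the smallest member of $\clone{I}(\cc{IC}{}{})$ --- so that $R_w$ is a base of $\clone{W}$, i.e.\ $\clone{W} = \pcclone{R_w}$ and in particular $R_w \in \clone{W}$ --- it suffices to prove the single inclusion $\clone{W} \subseteq \pcclone{\Gamma}$; the desired $R_w \in \pcclone{\Gamma}$ then follows at once. To obtain this inclusion I would show that $\pcclone{\Gamma}$ lies in $\clone{I}(\cc{IC}{}{})$ and invoke the minimality of $\clone{W}$.

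The membership $\pcclone{\Gamma} \in \clone{I}(\cc{IC}{}{})$ amounts to two routine verifications. First, $\pcclone{\Gamma}$ is by construction the smallest q.p.p.-closed set containing $\Gamma$, hence is itself q.p.p.-closed, i.e.\ $\pcclone{\pcclone{\Gamma}} = \pcclone{\Gamma}$; so it is a weak partial co-clone. Second, I must check $\cclone{\pcclone{\Gamma}} = \cc{IC}{}{}$. Since every q.p.p.\ definition is in particular a p.p.\ definition (one simply uses no existential quantifiers), we have $\Gamma \subseteq \pcclone{\Gamma} \subseteq \cclone{\Gamma}$; applying the monotone, idempotent operator $\cclone{\cdot}$ sandwiches $\cclone{\pcclone{\Gamma}}$ between $\cclone{\Gamma}$ and $\cclone{\cclone{\Gamma}} = \cclone{\Gamma}$, whence $\cclone{\pcclone{\Gamma}} = \cclone{\Gamma}$. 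As $\Gamma$ is a base of $\cc{IC}{}{}$ we have $\cclone{\Gamma} = \cc{IC}{}{}$, so indeed $\cclone{\pcclone{\Gamma}} = \cc{IC}{}{}$ and $\pcclone{\Gamma} \in \clone{I}(\cc{IC}{}{})$. Minimality of $\clone{W}$ now yields $\clone{W} \subseteq \pcclone{\Gamma}$, which completes the argument; note that finiteness of $\Gamma$ plays no role here (it matters only for the CV-reduction consequences drawn elsewhere).

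The genuine content, and the step I expect to be the real obstacle, is not the inclusion argument above but the fact --- presupposed by the very definition of $R_w$ --- that $\clone{I}(\cc{IC}{}{})$ possesses a least element at all. I would approach this through the Galois connection of the cited theorem of Romov: under $\ppol(\cdot)$ the interval $\clone{I}(\cc{IC}{}{})$ corresponds, order-reversingly, to the family of strong partial clones whose set of total operations is exactly $\pol(\cc{IC}{}{})$, and a least member of $\clone{I}(\cc{IC}{}{})$ corresponds to a greatest strong partial clone with this prescribed total part. The subtlety is that this greatest element cannot simply be produced by intersecting (dually, joining) members, because composing partial operations drawn from two such clones can create new total operations lying outside $\pol(\cc{IC}{}{})$; thus the naive meet need not remain in the interval. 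Establishing existence of the least element therefore requires the dedicated machinery of Schnoor and Schnoor~\cite{schnoor2008a}, who exhibit the weak base relation explicitly (with the minimal relational form refined in~\cite{Lagerkvist2014}). For the present theorem I would take this existence as given by the definition and rely on the short inclusion argument above.
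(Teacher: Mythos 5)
The paper itself contains no proof of this theorem --- it is imported by citation from Schnoor and Schnoor~\cite{schnoor2008a} --- so there is no internal argument to compare against; judged on its own terms, your proof is correct relative to the paper's definitions, and your decomposition of the statement is the right one. Given that $R_w$ is by definition a base of the least element $\clone{W}$ of the interval $\clone{I}(\cc{IC}{}{})$, your two verifications (that $\pcclone{\Gamma}$ is itself q.p.p.-closed, and that $\cclone{\pcclone{\Gamma}} = \cclone{\Gamma} = \cc{IC}{}{}$ via the sandwich $\Gamma \subseteq \pcclone{\Gamma} \subseteq \cclone{\Gamma}$ together with monotonicity and idempotence of $\cclone{\cdot}$) do place $\pcclone{\Gamma}$ in $\clone{I}(\cc{IC}{}{})$, and minimality of $\clone{W}$ then yields $\pcclone{R_w} = \clone{W} \subseteq \pcclone{\Gamma}$, hence $R_w \in \pcclone{\Gamma}$. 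You are also right that, organized this way, all the substantive mathematics sits in the existence of a least element of the interval (dually, of a greatest strong partial clone with prescribed total part), that this cannot be obtained by naive lattice operations since joins of strong partial clones can create new total operations, and that this is precisely what the cited work establishes; the paper's definition of weak base presupposes it. One nuance deserves mention: in~\cite{schnoor2008a} (and in~\cite{Lagerkvist2014}) the defining property of a weak base is formulated relative to \emph{finite} bases only, and existence is proved by an explicit construction showing $R_w \in \pcclone{\Gamma}$ for each finite base $\Gamma$; so your remark that finiteness of $\Gamma$ plays no role is an artifact of the paper's stronger definitional presupposition (a least element of the \emph{full} interval), not a consequence the cited theorem delivers on its own. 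With that caveat made explicit --- as you do --- the argument is sound.
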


See Table~\ref{table:weak_bases} for a complete list of weak bases.
\begin{table*}  \scriptsize
\caption{Weak bases for all Boolean co-clones with a finite base}
\label{table:weak_bases}
\begin{tabularx}{\textwidth}{l l}
  \hline
  Co-clone & Weak base \\
  \hline
  $\cc{IBF}{}{}$ &  Eq$(x_1,x_2)$ \\
  $\cc{IR}{}{0}$ &  $\F(c_0)$  \\
  $\cc{IR}{}{1}$ &  $\T(c_1)$  \\
  $\cc{IR}{}{2}$ &  $\F(c_0) \wedge \T(c_1)$  \\

  $\cc{IM}{}{}$ &  $(x_1 \rightarrow x_2)$  \\
  $\cc{IM}{}{0}$  & $(x_1 \rightarrow x_2) \wedge \F(c_0)$ \\ 

  $\cc{IM}{}{1}$  & $(x_1 \rightarrow x_2) \wedge \T(c_1)$ \\

  $\cc{IM}{}{2}$ & $(x_1 \rightarrow x_2) \wedge \F(c_0)
  \wedge \T(c_1)$  \\

  $\cc{IS}{n}{0}, n \geq 2 $ & $\orn{}{n}(x_1, \ldots, x_n) \wedge \T(c_1)$  \\


  $\cc{IS}{n}{02}, n \geq 2$ & $\orn{}{n}(x_1, \ldots, x_n) \wedge
  \F(c_0) \wedge \T(c_1)$ \\
  $\cc{IS}{n}{01}, n \geq 2$ & $\orn{}{n}(x_1, \ldots, x_n)
  \wedge (x \rightarrow x_1 \cdots x_n)
   \wedge \T(c_1)$  \\
  $\cc{IS}{n}{00}, n \geq 2$ & $\orn{}{n}(x_1, \ldots, x_n)
  \wedge (x \rightarrow x_1 \cdots x_n)
    \wedge \F(c_0) \wedge \T(c_1)$  \\
  $\cc{IS}{n}{1}, n \geq 2$ & $\nandn{}{n}(x_1, \ldots, x_n) \wedge
  \F(c_0)$  \\
  $\cc{IS}{n}{12}, n \geq 2 $ & $\nandn{}{n}(x_1, \ldots, x_n) \wedge
  \F(c_0) \wedge \T(c_1)$  \\
  $\cc{IS}{n}{11}, n \geq 2$ & $\nandn{}{n}(x_1, \ldots, x_n)
  \wedge (x \rightarrow x_1 \cdots x_n)
   \wedge \F(c_0)$  \\


  $\cc{IS}{n}{10}, n \geq 2$ & $\nandn{}{n}(x_1, \ldots, x_n)
  \wedge (x \rightarrow x_1 \cdots x_n)
   \wedge \F(c_0) \wedge \T(c_1)$  \\

  $\cc{ID}{}{}$ &  $(x_1 \neq x_2)$ \\
  $\cc{ID}{}{1}$  & $(x_1 \neq x_2) \wedge \F(c_0) \wedge \T(c_1)$  \\
  $\cc{ID}{}{2}$  & $\orn{2 \neq}{2}(x_1,x_2,x_3,x_4) \wedge \F(c_0) \wedge \T(c_1)$  \\

  $\cc{IL}{}{}$   & $\evenn{}{4}(x_1,x_2,x_3,x_4)$ \\
  $\cc{IL}{}{0}$  & $\evenn{}{3}(x_1,x_2,x_3) \wedge \F(c_0)$  \\
  $\cc{IL}{}{1}$  & $\oddn{}{3}(x_1,x_2,x_3) \wedge \T(c_1)$  \\
  $\cc{IL}{}{2}$  & $\evenn{3 \neq}{3}(x_1,\ldots,x_6) \wedge \F(c_0) \wedge \T(c_1)$  \\
  $\cc{IL}{}{3}$  & $\evenn{4 \neq}{4}(x_1,\ldots,x_8)$  \\

  $\cc{IV}{}{}$  & $(\overbar{x_1} \leftrightarrow
\overbar{x_2}\overbar{x_3}) \wedge (\overbar{x_2} \vee \overbar{x_3}
\rightarrow \overbar{x_4})$ \\
  $\cc{IV}{}{0}$  & $(\overbar{x_1} \leftrightarrow \overbar{x_2}\overbar{x_3}) \wedge \F(c_0)$  \\
  $\cc{IV}{}{1}$  & $(\overbar{x_1} \leftrightarrow
\overbar{x_2}\overbar{x_3}) \wedge (\overbar{x_2} \vee \overbar{x_3}
\rightarrow \overbar{x_4}) \wedge \T(c_1)$  \\
  $\cc{IV}{}{2}$  & $(\overbar{x_1} \leftrightarrow
\overbar{x_2}\overbar{x_3}) \wedge \F(c_0) \wedge \T(c_1)$  \\

  $\cc{IE}{}{}$  & $(x_1 \leftrightarrow x_2x_3) \wedge (x_2 \vee x_3
\rightarrow x_4)$ \\
  $\cc{IE}{}{0}$  & $(x_1 \leftrightarrow x_2x_3) \wedge (x_2 \vee x_3
\rightarrow x_4) \wedge \F(c_0)$  \\
  $\cc{IE}{}{1}$  & $(x_1 \leftrightarrow x_2x_3) \wedge \T(c_1)$  \\
  $\cc{IE}{}{2}$  & $(x_1 \leftrightarrow x_2x_3) \wedge \F(c_0) \wedge \T(c_1)$  \\

  $\cc{IN}{}{}$  & $\evenn{}{4}(x_1,x_2,x_3,x_4) \wedge x_1x_4
\leftrightarrow x_2x_3$ \\
  $\cc{IN}{}{2}$  & $\evenn{4 \neq}{4}(x_1,\ldots,x_8) \wedge x_1x_4
\leftrightarrow x_2x_3$  \\  

  $\cc{II}{}{}$  & $(x_1 \leftrightarrow x_2x_3) \wedge (\overbar{x_4}
\leftrightarrow \overbar{x_2}\overbar{x_3})$ \\
  $\cc{II}{}{0}$  & $(\overbar{x_1} \vee \overbar{x_2}) \wedge (\overbar{x_1}\overbar{x_2} \leftrightarrow \overbar{x_3}) \wedge \F(c_0)$  \\
  $\cc{II}{}{1}$  & $(x_1 \vee x_2) \wedge (x_1x_2 \leftrightarrow x_3) \wedge \T(c_1)$  \\
  $\cc{II}{}{2}$  & $\Rddd(x_1,\ldots,x_6) \wedge \F(c_0)
\wedge \T(c_1)$ \\
  \hline
\end{tabularx}
\end{table*}

\subsection{Additional Proofs for Section~\ref{section:eth}} \label{appendix:eth}

\begin{lemma} \label{mored1}
  $\SAT(\Rel{II_2})$-2 LV-reduces to $\MO(\Rel{IL_2})$.
\end{lemma}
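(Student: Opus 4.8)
The plan is to turn each constraint of the source instance into a single $\Rel{IL_2}$ constraint and let the maximization objective recover the information that the affine relaxation loses. The key observation is already available from the proof of Theorem~\ref{theorem:easiestwmo}, where $\Rel{II_2}$ is written as $\argmax_{\tup{x}} (\tup{x}_4+\tup{x}_5+\tup{x}_6)$ over the single constraint $\Rel{IL_2}(\tup{x}_4,\tup{x}_5,\tup{x}_6,\tup{x}_1,\tup{x}_2,\tup{x}_3,\tup{x}_7,\tup{x}_8)$. Read as a relation, that constraint forces $x_4 \oplus x_5 \oplus x_6 = 0$, $x_i = \overline{x_{i+3}}$ for $i \in \{1,2,3\}$, $x_7 = 0$ and $x_8 = 1$; its feasible tuples are exactly the three members of $\Rel{II_2}$ together with one extra tuple, the one with $x_4=x_5=x_6=0$. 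Hence a single $\Rel{IL_2}$ constraint realizes an affine relaxation of $\Rel{II_2}$, and on its feasible set $x_4+x_5+x_6 \in \{0,2\}$, the value $2$ being attained precisely on the genuine $\Rel{II_2}$ tuples.

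Given an instance $I=(V,C)$ of $\SAT(\Rel{II_2})$-$2$ with $n=|V|$ and $m=|C|\le 2n$, I would build an $\MO(\Rel{IL_2})$ instance as follows. First force two global variables $g_0,g_1$ to $0$ and $1$ using one auxiliary $\Rel{IL_2}$ constraint. For every constraint $C_j=\Rel{II_2}(y_1,\dots,y_8)$ add the relaxed constraint $\Rel{IL_2}(y_4,y_5,y_6,y_1,y_2,y_3,y_7,y_8)$ on the same variables, and introduce three fresh \emph{reward} variables $e_1^j,e_2^j,e_3^j$ forced to $\overline{y_1},\overline{y_2},\overline{y_3}$ (inequality being itself an $\Rel{IL_2}$-gadget via the built-in $\neqq$-columns). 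Since the relaxation forces $y_{i+3}=\overline{y_i}$, the signal $\sum_j(e_1^j+e_2^j+e_3^j)=\sum_j(y_4^j+y_5^j+y_6^j)$ is exactly the quantity I want to maximize, and it equals $2m$ iff every gadget lands on a true $\Rel{II_2}$ tuple, i.e.\ iff $I$ is satisfiable.

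The subtlety is that the \emph{unweighted} objective counts every variable, so the term $\sum_{v\in V} v$ would drown the signal. To neutralize it I would add, for each $v\in V$, one global complement variable $w_v$ forced to $\overline{v}$, so that $\sum_v(v+w_v)=n$ is constant; arranging all auxiliary variables of the $\neqq$-gadgets to occur in complementary pairs makes their contribution constant as well. The total objective then equals $C_0+\sum_j(y_4^j+y_5^j+y_6^j)$ for a fixed, computable constant $C_0$, whence the maximum is $C_0+2m$ when $I$ is satisfiable and at most $C_0+2m-2$ otherwise (the instance being infeasible already at the relaxation level when the affine system is unsatisfiable). Setting the acceptance threshold to $C_0+2m$ gives the required equivalence. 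The variable count is $n$ originals, $n$ complements, $3m$ reward variables and $O(1)$ globals, plus $O(1)$ auxiliaries per $\neqq$-gadget; using $m\le 2n$ this is $O(n)$, and the map is clearly polynomial-time, so it is an LV-reduction.

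The main obstacle is precisely this interplay between the equal-weight counting of $\MO$ and the need to isolate the $x_4+x_5+x_6$ signal: it is what forces the reward/complement bookkeeping. I expect the delicate part to be verifying that the neutralization leaves a \emph{constant} offset $C_0$ independent of the assignment, and observing that the bounded-occurrence hypothesis $m\le 2n$ is essential to keep the blow-up linear — a naive ``simulate weight by duplication'' strategy would multiply the reward variables by a factor $\Theta(n)$ and yield only a quadratic, hence non-LV, reduction.
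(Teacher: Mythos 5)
Your proposal is correct and follows essentially the same route as the paper's proof: force two global constants, neutralize the original variables' contribution with complement variables, relax each $\Rel{II_2}$-constraint to an $\Rel{IL_2}$-constraint, and add three per-constraint reward variables (the complements of $y_1,y_2,y_3$) so that the optimum reaches the threshold $2m$ above the constant offset exactly when every gadget avoids the spurious all-ones tuple, with the $m \le 2n$ bound keeping the variable count linear. The only cosmetic difference is that the paper ties the three reward variables to $y_1,y_2,y_3$ with a single extra $\Rel{IL_2}$-constraint (which enforces the complementarity and parity simultaneously), whereas you use three separate $\neqq$-gadgets.
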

\begin{proof}
We reduce an instance $I$ of $\SAT(\Rel{II_2})$-$2$ on $n$ variables 
constraints to an instance of $\MO(\Rel{IL_2})$ containing at
most $2+8n$ variables.
Let $v_0,v_1$ be two fresh global variables constrained as
$\Rel{IL_2}(v_0,v_0,v_0,v_1,v_1,v_1,v_0,v_1)$. Note that this forces
$v_0$ to $0$ and $v_1$ to $1$ in any satisfying assignment. Now, for
every variable $x$ in the $\SAT$-instance we create an additional variable $x'$ which we 
constrain as $\Rel{IL_2}(x',x,v_1, x,x',v_0,v_0,v_1)$.  This
correctly implements $\neqq(x,x')$.
%
For the $i$-th constraint, $\Rel{II_2}(x_1,\dots,x_6,c_0,c_1)$, in $I$ we
create three variables $z_i^1, z_i^2, z_i^3$ and constrain them
as $\Rel{IL_2}(z_i^1,z_i^2,z_i^3,x_1,x_2,x_3,c_0,c_1)$,
we also add the constraint $\Rel{IL_2}(x_4,x_5,x_6,x_1,x_2,x_3,c_0,c_1)$.
Since every variable in the $\SAT$-instance $I$ can occur in at most two
constraints we have that $m \leq 2n$.  Hence the resulting $\MO$
instance contains at most $2+2n+3\cdot 2n = 2 + 8n$ variables.
Since $x$ and $x'$, and $v_0$ and $v_1$, must take different values it holds that the measure of a solution of this new instance is exactly the number of variables $z_i^j$ that are mapped to $1$.
Hence, for an optimal solution the objective value is $\ge 2m$ if and only if $I$ is satisfiable.
\qed
\end{proof}

\begin{lemma} \label{mored2}
  $\MO(\Rel{IL_2})$ LV-reduces to $\MO(\Rel{IL_0})$.
\end{lemma}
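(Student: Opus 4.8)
The plan is to view an $\MO(\Rel{IL_2})$ instance as an affine optimization problem over $\mathrm{GF}(2)$ and reduce it to the corresponding \emph{homogeneous} problem, which is exactly what $\Rel{IL_0}$ expresses. Recall from Table~\ref{table:weak_bases} that $\Rel{IL_0}(x_1,x_2,x_3,c_0)\equiv\evenn{}{3}(x_1,x_2,x_3)\wedge\F(c_0)$ and $\Rel{IL_2}(x_1,\dots,x_6,c_0,c_1)\equiv\evenn{3\ne}{3}(x_1,\dots,x_6)\wedge\F(c_0)\wedge\T(c_1)$. Thus a single $\Rel{IL_2}$-constraint is equivalent to the affine system $x_1\oplus x_2\oplus x_3=0$, $x_1\oplus x_4=1$, $x_2\oplus x_5=1$, $x_3\oplus x_6=1$, $c_0=0$, $c_1=1$, whereas each $\Rel{IL_0}$-constraint is a pair of \emph{homogeneous} equations. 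The latter can directly encode any homogeneous ternary or binary parity equation and force variables to $0$: using one global variable $v_0$ pinned to $0$ by $\Rel{IL_0}(v_0,v_0,v_0,v_0)$, the constraint $\Rel{IL_0}(a,b,c,v_0)$ expresses $a\oplus b\oplus c=0$, while $\Rel{IL_0}(a,b,v_0,v_0)$ expresses $a=b$ and $\Rel{IL_0}(a,v_0,v_0,v_0)$ forces $a=0$.

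First I would introduce a single global variable $t$ as a surrogate for the constant $1$ and \emph{homogenize}, replacing every right-hand side $1$ by $t$. Concretely, for every input constraint $\Rel{IL_2}(y_1,\dots,y_8)$ I add the $\Rel{IL_0}$-constraints encoding $y_1\oplus y_2\oplus y_3=0$, $y_1\oplus y_4\oplus t=0$, $y_2\oplus y_5\oplus t=0$, $y_3\oplus y_6\oplus t=0$, $y_7=0$ and $y_8=t$. With $t=1$ this system is precisely the original $\Rel{IL_2}$-constraint, whereas with $t=0$ it collapses to its homogeneous part. Note that this introduces \emph{no} auxiliary variable per constraint, so the variable set is merely $\var(I)\cup\{v_0,t\}$ together with the gadget below, and the map is clearly polynomial-time.

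The main obstacle is that no hard constraint can force $t=1$: every $\Rel{IL_0}$-instance is a homogeneous linear system, so the all-zeros assignment is always feasible and $\Rel{IL_2}$ is not even q.p.p.-definable in $\Rel{IL_0}$. This is exactly where the $\Maxones$ objective must be exploited. I therefore attach to $t$ a block of $M$ fresh shadow variables $t_1,\dots,t_M$, each tied to $t$ by $\Rel{IL_0}(t_j,t,v_0,v_0)$ (so $t_j=t$), and choose $M=|\var(I)|+1$. Then $t$ and its shadows contribute $(M+1)\,t$ to the objective, which dominates the at most $|\var(I)|$ ones obtainable among the original variables. Consequently the $t=0$ branch has objective at most $|\var(I)|<M+1$, while the $t=1$ branch realizes exactly the solutions of $I$ with objective $\mathrm{OPT}+M+1$; the verification to carry out is that $t=1$ solutions biject with solutions of $I$ and preserve the number of original ones. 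Hence every optimal solution sets $t=1$, the optima correspond, and $I$ admits a solution of value $\ge k$ iff the constructed instance admits one of value $\ge k+M+1$. Since the total number of variables is $|\var(I)|+2+M=2\,|\var(I)|+3$, this is an LV-reduction with parameter $2$.
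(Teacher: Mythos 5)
Your reduction is correct, and at heart it is the same construction as the paper's: homogenize the affine system over $\mathrm{GF}(2)$, let one global variable (your $t$, the paper's $v_1$) serve as a surrogate for the constant $1$, and use the maximization objective together with a linear block of shadow variables tied to that variable (your $M=n+1$ copies $t_1,\dots,t_M$, the paper's $n$ copies $y_1,\dots,y_n$) to force it to $1$ in every sufficiently good solution; both constructions are LV-reductions with parameter $2$.

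The one substantive difference lies in how the two constant coordinates of $\Rel{IL_2}(y_1,\dots,y_8)$ are handled, and here your version is the tighter one. You enforce $y_7=0$ unconditionally (via $\Rel{IL_0}(y_7,v_0,v_0,v_0)$) and $y_8=t$, so under $t=1$ your gadget reproduces $\Rel{IL_2}$ exactly. The paper instead posts the single constraint $\Rel{IL_0}(v_1,c_0,c_1,v_0)$, which under $v_1=1$ enforces only $c_0\oplus c_1=1$ and therefore also admits the spurious assignment $(c_0,c_1)=(1,0)$. This is not cosmetic: for the unsatisfiable two-constraint instance $\Rel{IL_2}(x_1,\dots,x_6,c_0,c_1)\wedge\Rel{IL_2}(x_1,\dots,x_6,c_1,c_0)$ (on $n=8$ variables), the paper's output instance still has solutions with $v_1=1$, e.g.\ of measure $n+1+4$, so its claimed equivalence ``measure $n+1+k$ iff $I$ has a solution of measure $k$'' fails on that instance; your construction maps the same instance to one in which every solution has $t=0$ and hence measure at most $n<M+1$, as required. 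So your proposal is not merely correct: it repairs the one step of the paper's own argument that is stated too loosely, at the negligible cost of one extra global variable and a slightly larger shadow block.
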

\begin{proof}
We reduce an instance $I$ of $\MO(\Rel{IL_2})$ on $n$ variables 
to an instance of $\MO(\Rel{IL_0})$ on $2+2n$ variables.
Let $v_0, v_1, y_1,\dots,y_n$ be fresh variables and constrain them as
$\Rel{IL_0}(v_0,v_0,v_0,v_0) \wedge \Rel{IL_0}(v_1,v_0,y_1,v_0) \wedge \ldots \wedge
\Rel{IL_0}(v_1,v_0,y_n,v_0)$.  Note that this forces $v_0$ to $0$, and
that if $v_1$
is mapped to $0$, then so are the variables $y_1,\dots,y_n$. If $v_1$
is mapped to $1$ on the other hand, then $y_1,\dots,y_n$ can be
mapped to $1$.
For every constraint $\Rel{IL_2}(x_1,x_2,x_3,x_4,x_5,x_6,c_0,c_1)$ we create the
constraints
$\Rel{IL_0}(x_1,x_2,x_3,v_0) \wedge
 \Rel{IL_0}(v_1,x_1,x_4,v_0) \wedge
 \Rel{IL_0}(v_1,x_2,x_5,v_0) \wedge
 \Rel{IL_0}(v_1,x_3,x_6,v_0) \wedge
 \Rel{IL_0}(v_1,c_0,c_1,v_0)$.
The resulting $\MO(\Rel{IL_0})$ instance has $2 + 2n$
variables and has a solution with measure $n+1+k$ if and only if $I$ has a solution with measure $k$.
\qed
\end{proof}

\begin{lemma} \label{mored3}
  $\MO(\Rel{II_2})$ LV-reduces to $\MO(\Rel{IN_2})$.
\end{lemma}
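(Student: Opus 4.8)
The plan is to turn the q.w.p.p.\ identity used in the proof of Theorem~\ref{theorem:easiestwmo} into an explicit many-one reduction, and then to repair the one feature that distinguishes the unweighted, linear-variable setting from the weighted one. Given an instance $I$ of $\MO(\Rel{II_2})$, I would first put it into a normal form in which two global variables $v_0,v_1$ occupy the constant positions $c_0,c_1$ of every constraint; identifying all constant-position variables with a single pair $v_0,v_1$ is harmless, shifts the objective value by an instance-computable constant, and increases the number of variables by at most two. For each constraint $\Rel{II_2}(a_1,\dots,a_6,v_0,v_1)$ I would then emit the single constraint $\Rel{IN_2}(v_0,a_1,a_2,a_6,v_1,a_4,a_5,a_3)$, exactly the permutation of variables appearing in the first line of the proof of Theorem~\ref{theorem:easiestwmo}. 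No new variables are created in this step.

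The structural heart of the argument is that $\Rel{IN_2}$ is closed under complementation: its six tuples split into three complementary pairs, so each emitted constraint forces $v_1=\neg v_0$ but pins neither $v_0$ nor $v_1$ to a constant. Hence every feasible assignment of the new instance $I'$ falls into one of exactly two families. In \emph{Case A} ($v_0=0$, $v_1=1$) every emitted constraint collapses to the three tuples of $\Rel{IN_2}$ with first coordinate $0$, which are precisely the tuples encoding $\Rddd(a_1,\dots,a_6)$; thus Case A assignments are in bijection with the satisfying assignments of $I$. In \emph{Case B} ($v_0=1$, $v_1=0$) the constraints collapse to the complementary tuples, so Case B is exactly the bitwise complement of Case A.

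The main obstacle is precisely this symmetry: under the unweighted objective the complement $\bar h$ of a feasible $h$ may contain more ones than $h$, so a verbatim translation would let the optimum select the wrong orientation of $(v_0,v_1)$. I would break the symmetry by padding. Each constraint $\Rel{IN_2}(v_0,v_0,v_0,v_0,p_1,p_2,p_3,p_4)$ is feasible for either value of $v_0$ and forces $p_1=\dots=p_4=\neg v_0=v_1$; using $\Theta(|\var(I)|)$ fresh variables of this kind I can add linearly many variables that are all equal to $v_1$. If their number is chosen to be at least the number of real variables, then the total number of ones in any Case A assignment dominates that of every Case B assignment, so the optimum of $I'$ is attained in Case A.

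A short bookkeeping step then yields $\mathrm{opt}(I') = c_I + \mathrm{opt}(I)$ for an instance-computable constant $c_I$ (the number of forced-to-one padding and constant variables), which lets the decision threshold $k$ for $I$ be matched by $k+c_I$ for $I'$; when $I$ is unsatisfiable both families are empty and $I'$ is infeasible, so the equivalence still holds. Finally, only $O(|\var(I)|)$ fresh variables were introduced, giving $|\var(I')| = 2\,|\var(I)| + O(1)$, so the construction is an LV-reduction with parameter $2$. The step I expect to demand the most care is the domination bound: verifying, for every instance and in both the satisfiable and unsatisfiable cases, that the padding makes Case A win, and tracking the additive constant $c_I$ exactly.
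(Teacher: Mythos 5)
Your proposal is correct and is essentially the paper's own proof: the same coordinate permutation implementing $\Rel{II_2}$ by $\Rel{IN_2}$ relative to a global pair $v_0,v_1$, the same symmetry-breaking padding via constraints of the form $\Rel{IN_2}(v_0,v_0,v_0,v_0,y,y,y,y)$ that force linearly many variables equal to $v_1$ so that the complement-closed (Case B) solutions can never beat the intended (Case A) ones, and the same additive bookkeeping of the optimum. The only cosmetic differences are that the paper ties each constraint's $c_0,c_1$ to the global pair with an extra constraint $\Rel{IN_2}(v_0,c_0,c_0,v_0,v_1,c_1,c_1,v_1)$ rather than identifying constant-position variables up front, and it uses $2n$ padding variables (giving strict domination and parameter $3$) where you use about $n$ (weak domination, parameter $2$) --- both choices are sound.
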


\begin{proof}
We reduce an instance $I$ of $\MO(\Rel{II_2})$ over $n$
variables to an instance of $\MO(\Rel{IN_2})$ over $2 + 3n$
variables.
Create two fresh variables $v_0,v_1$ and constrain them as
$\Rel{IN_2}(v_0,v_0,v_0,v_0,v_1,v_1,v_1,v_1)$ in order to force
$v_0$ and $v_1$ to be mapped to different values.
We then create the $2n$
variables $y_1,\ldots,y_{2n}$ and constrain them as $\bigwedge_{i=1}^{2n}
\Rel{IN_2}(v_0,v_0,v_0,v_0,y_i,y_i,y_i,y_i)$.
This forces all of the
variables $y_i$ to be mapped to the same value as $v_1$.
We can now 
express
$\Rel{II_2}(x_1, x_2, x_3, x_4, x_5, x_6, c_0, c_1)$ using
the implementation
$\Rel{IN_2}(v_0,x_1,x_2,x_6,v_1,x_4,x_5,x_3) \wedge
\Rel{IN_2}(v_0,c_0,c_0,v_0,v_1,c_1,c_1,v_1)$.
Note that in any optimal solution of the new instance $v_1$ will be mapped to
$1$ which means that the implementation of $\Rel{II_2}$ given above
will be correct.
The resulting instance has a solution with measure $1+2n+k$ if and only if $I$ has a solution with measure $k$.
\qed
\end{proof}

\begin{lemma} \label{mored4}
  $\MO(\Rel{IS^{2}_{1}})$ LV-reduces to $\MO(\Rel{ID_2})$.
\end{lemma}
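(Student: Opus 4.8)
The plan is to build an LV-reduction with parameter $C=3$ whose whole difficulty lies in keeping the objective honest despite the rigidly self-dual shape of $\Rel{ID_2}$. Recall that $\Rel{ID_2}(x_1,x_2,x_3,x_4,c_0,c_1)$ holds exactly when $c_0=0$, $c_1=1$, $x_3=\neg x_1$, $x_4=\neg x_2$ and $x_1\vee x_2$. First I would introduce two global variables $v_0,v_1$ and force them to $0$ and $1$ with a single atom such as $\Rel{ID_2}(v_1,v_1,v_0,v_0,v_0,v_1)$. For every variable $x$ of the source $\MO(\Rel{IS^2_1})$ instance I would then add two companions: a complement $\overbar{x}$, pinned by $\Rel{ID_2}(x,v_1,\overbar{x},v_0,v_0,v_1)$ (which forces $\overbar{x}=\neg x$), and a \emph{positive copy} $z_x$, pinned by $\Rel{ID_2}(\overbar{x},v_1,z_x,v_0,v_0,v_1)$ (which forces $z_x=\neg\overbar{x}=x$). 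The point is that $z_x$ is manufactured by negating the \emph{already present} $\overbar{x}$, so it costs no additional variables.

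Each source constraint $\Rel{IS^2_1}(a,b,c_0)$ asserts $\nand(a,b)$ together with $c_0=0$; identifying the constant position with $v_0$, I would implement it by the single atom $\Rel{ID_2}(\overbar{a},\overbar{b},a,b,v_0,v_1)$. This puts $\overbar{a},\overbar{b}$ in the two disjunction positions, so the built-in clause becomes $\overbar{a}\vee\overbar{b}=\nand(a,b)$, while the two equality positions only re-assert the (already enforced) ties $a=\neg\overbar{a}$ and $b=\neg\overbar{b}$. Since no new variable is spent per constraint, a source on $n$ variables is mapped to an image with exactly $3n+2$ variables, independently of the number of constraints, giving the required linear bound.

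For correctness I would observe that in every feasible image solution the companions are completely determined, $\overbar{x}=\neg x$ and $z_x=x$, so each triple $\{x,\overbar{x},z_x\}$ contributes $x+\overbar{x}+z_x=1+x$ to the count of ones. Together with $v_1$ this makes the image objective equal to $n+1+\sum_x x$. Hence a source solution of measure $k$ corresponds precisely to an image solution of measure $n+1+k$, the threshold maps $k\mapsto n+1+k$, and the equivalence of yes-instances follows.

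The main obstacle is exactly this objective bookkeeping, and it is more subtle than it looks. The naive attempt — add only the complements $\overbar{x}$ needed to express $\nand$ as a disjunction of negations — fails outright, because $x+\overbar{x}=1$ collapses the image objective to the constant $n+1$, so the reduced problem degenerates to mere feasibility and carries no information about $k$. The repair is the extra copy $z_x$, obtained for free by negating $\overbar{x}$ a second time: it re-injects one unit of $x$-dependence per variable while leaving both the variable count linear and the disjunction positions of the $\nand$-atoms untouched. I expect the only place needing genuine care to be checking that the three kinds of $\Rel{ID_2}$ atoms interact consistently, so that no companion is over-constrained and the disjunction inside each $\nand$-atom is not short-circuited by the forced equalities.
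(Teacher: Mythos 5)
Your reduction follows essentially the same strategy as the paper's (force $v_0,v_1$ to the two constants, give every source variable a complement and a positive copy so that each triple contributes $1+x$ to the measure, and realize $\nand(a,b)$ as the built-in disjunction acting on the complemented positions), but it contains one genuine bug: your constraint gadget discards the $\F$-part of the source constraint. A source constraint $\Rel{IS^2_1}(a,b,c)$ forces \emph{two} things: $\nand(a,b)$ \emph{and} $c=0$, where $c$ is an ordinary variable of the source instance (the name $c_0$ in the paper is only a naming convention; different constraints may place different variables in that position). Your atom $\Rel{ID_2}(\overbar{a},\overbar{b},a,b,v_0,v_1)$ puts the global $v_0$ in the fifth slot, so the image instance never forces $c$ to $0$; the image is a strict relaxation of the source. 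Concretely, take the source instance on variables $\{a,b,c\}$ with the single constraint $\Rel{IS^2_1}(a,b,c)$ and threshold $k=2$: the source is a no-instance (every solution has $c=0$ and at most one of $a,b$ equal to $1$, so measure at most $1$), but your image instance admits the assignment $a=1$, $b=0$, $c=1$ (companions set accordingly), which has measure $n+1+2=6$, so under your threshold map $k \mapsto n+1+k$ the image is a yes-instance. Hence condition (1) in the definition of an LV-reduction fails, and no fixed threshold map can repair this, since dropping the $c=0$ constraints changes the optimization problem itself.

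The fix costs one token: put the source variable $c$ itself, not $v_0$, into the fifth argument, i.e.\ use $\Rel{ID_2}(\overbar{a},\overbar{b},a,b,c,v_1)$. Since the fifth coordinate of $\Rel{ID_2}$ is constantly $0$, this enforces $c=0$ exactly as required, the equality and disjunction positions behave as you argued, and your bookkeeping ($3n+2$ variables, image objective $n+1+\sum_x x$) goes through unchanged. The repaired reduction is then precisely the one in the paper, which for every source constraint $\Rel{IS^2_1}(x,y,c_0)$ introduces $\Rel{ID_2}(x',y',x,y,c_0,v_1)$, with $x'=\neg x$ and $x''=x$ playing the roles of your $\overbar{x}$ and $z_x$.
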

\begin{proof}
We reduce an instance of $\MO(\Rel{IS^{2}_{1}})$ on $n$ variables
to an instance of $\MO(\Rel{ID_2})$ on $2+3n$ variables.
Create two new variables $v_0$ and $v_1$ and constrain them as
$\Rel{ID_2}(v_1,v_1,v_0,v_0,v_0,v_1)$.  Note that this forces $v_0$ to
$0$ and $v_1$ to $1$. For every variable $x$ we introduce two extra
variables $x'$ and $x''$ and constrain them as
$\Rel{ID_2}(x,x',x',x,v_0,v_1) \wedge
\Rel{ID_2}(x',x'',x'',x',v_0,v_1)$. Note that this implements the
constraints $\neqq(x,x')$ and $\neqq(x',x'')$, and that no matter what $x$ is mapped
to exactly one of $x'$ and $x''$ is mapped to $1$. For every
constraint $\Rel{IS^{2}_{1}}(x,y,c_0)$ we then introduce the
constraint $\Rel{ID_2}(x',y',x,y,c_0,v_1)$.
%
The resulting instance has a solution with measure $1+n+k$ if and only if $I$ has a solution with measure $k$.
\qed
\end{proof}

\begin{lemma} \label{mored5}
  $\MO(\Rel{IL_2})$ LV-reduces to $\MO(\Rel{IL_3})$.
\end{lemma}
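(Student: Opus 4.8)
The plan is to exploit that both relations are affine ``parity-plus-negation'' relations: $\Rel{IL_3}(x_1,\dots,x_8)$ holds exactly when $x_1\oplus x_2\oplus x_3\oplus x_4=0$ and $x_{4+i}=\overbar{x_i}$ for $i\le 4$, while $\Rel{IL_2}(x_1,\dots,x_6,c_0,c_1)$ additionally pins a ternary parity to the constants $c_0=0,c_1=1$. First I would introduce two global variables $v_0,v_1$ together with the single constraint $\Rel{IL_3}(v_0,v_0,v_0,v_0,v_1,v_1,v_1,v_1)$, which forces $v_1=\overbar{v_0}$ (and leaves $v_0$ free). Assuming for the moment that $v_0$ takes the value $0$, each original constraint $\Rel{IL_2}(a_1,\dots,a_6,c_0,c_1)$ is then simulated by the application $\Rel{IL_3}(v_0,a_1,a_2,a_3,v_1,a_4,a_5,a_6)$ together with the constant-linking gadgets $\Rel{IL_3}(v_0,c_0,v_0,v_0,v_1,v_1,v_1,v_1)$ and $\Rel{IL_3}(v_1,c_1,v_1,v_1,v_0,v_0,v_0,v_0)$, which force $c_0=v_0$ and $c_1=v_1$ without any fresh variables. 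A direct check of the parity and negation conditions shows that, when $v_0=0$, this exactly reproduces $\Rel{IL_2}(a_1,\dots,a_6,c_0,c_1)$; this is simply the unweighted counterpart of the q.w.p.p.\ identity for $\Rel{IL_2}$ in $\Rel{IL_3}$ already recorded in the proof of Theorem~\ref{theorem:easiestwmo}.

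The main obstacle is that $\cc{IL}{}{3}$ is self-dual, so $\Rel{IL_3}$ is invariant under global complementation and cannot by itself force any variable to an absolute value: the ``wrong'' orientation $v_0=1,v_1=0$ is always feasible (there the core gadget enforces $a_1\oplus a_2\oplus a_3=1$ instead) and hence cannot be excluded syntactically. I would break this symmetry using the \MO\ objective. Introduce $n+1$ fresh auxiliary variables $y_1,\dots,y_{n+1}$, each tied to $v_1$ by a constraint $\Rel{IL_3}(v_0,v_0,v_0,v_0,y_j,v_1,v_1,v_1)$, which forces $y_j=v_1$ and uses no further variables. In the correct orientation all $y_j$ equal $1$, contributing $n+1$ ones, whereas in the complemented orientation all $y_j$ equal $0$ and only $v_0$ together with the $\le n$ original variables can be set to $1$, giving at most $n+1$ ones in total. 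Consequently every correct-orientation solution strictly dominates every wrong-orientation one.

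It then remains to track the objective value. Any solution of the constructed instance whose measure exceeds $n+1$ must use the orientation $v_0=0,v_1=1$, and there the auxiliary block together with $v_1$ contributes exactly $n+2$ ones, while the original variables (including $c_0=0$ and $c_1=1$) contribute precisely the \MO-measure they give in $I$. Hence the constructed instance has a solution of measure $\ge n+2+k$ if and only if $I$ has a solution of measure $\ge k$, so mapping the threshold $k\mapsto n+2+k$ yields a correct reduction. The instance uses $n+2+(n+1)=2n+3$ variables and is computed in polynomial time, so this is an LV-reduction with parameter $2$. The only genuinely delicate point is the domination count of the previous paragraph that rules out the self-dual flip; the parity and negation verifications of the individual gadgets are routine.
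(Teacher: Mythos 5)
Your proof is correct and takes essentially the same route as the paper's: a global complementary pair $v_0,v_1$ created by $\Rel{IL_3}(v_0,v_0,v_0,v_0,v_1,v_1,v_1,v_1)$, auxiliary variables forced equal to $v_1$ so that the \MO{} objective rules out the self-dual flip, and a per-constraint simulation of $\Rel{IL_2}$ by $\Rel{IL_3}$ constraints with the constant positions linked to $v_0,v_1$. The only differences are cosmetic --- you use $n+1$ symmetry-breaking variables instead of the paper's $2n$ (hence $2n+3$ variables rather than $3n+2$) and wire the linking gadgets slightly differently.
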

\begin{proof}
We reduce an instance of $\MO(\Rel{IL_2})$ on $n$ variables 
to an instance of $\MO(\Rel{IL_3})$ on $2+3n$ variables.
Create two new variables $v_0$ and $v_1$ and constrain them as
$R_{IL_3}(v_0,v_0,v_0,v_0,v_1,v_1,v_1,v_1)$.
Note that this forces $v_0$ and $v_1$ to be mapped to different values.
We then introduce fresh variables $y_1,\dots,y_{2n}$ and constrain
them as $\bigwedge_{i=1}^{2n} R_{IL_3}(v_0,v_0,v_0,v_0,y_i,y_i,y_i,y_i)$.  This
will ensure that every variables $y_i$ is mapped to the same value as $v_1$ and therefore that in every optimal solution $v_0$ is mapped to $0$ and
$v_1$ is mapped to $1$.
For every constraint $R_{IL_2}(x_1,\dots,x_6,c_0,c_1)$ we introduce the
constraints
$R_{IL_3}(c_0,x_1,x_2,x_3,c_1,x_4,x_5,x_6) \wedge 
 R_{IL_3}(c_0,c_0,c_0,c_0,v_1,v_1,v_1,v_1) \wedge
 R_{IL_3}(v_0,v_0,v_0,v_0,c_1,c_1,c_1,c_1)$.
The resulting instance has a solution with measure $1+2n+k$ if and only if $I$ has a solution with measure $k$.
\qed
\end{proof}

 \begin{lemma} \label{lemma:ethmaxones}
 If the ETH is false, then
 $\MO(\Gamma) \in \SE$ for every
 finite Boolean constraint language $\Gamma$.
 \end{lemma}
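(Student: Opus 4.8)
The plan is to combine the hypothesized subexponential algorithm for $3$-$\SAT$ with the sparsification lemma of Impagliazzo, Paturi and Zane~\cite{impagliazzo2001}. Assuming the ETH is false means exactly that $3$-$\SAT \in \SE$. Since languages $\Gamma$ for which $\MO(\Gamma)$ is in P (those which, by Theorem~\ref{theorem:maxones}, are $1$-closed, $\max$-closed, or closed under an arithmetical operation) are trivially in $\SE$, it is enough to treat an arbitrary finite $\Gamma$ by a single uniform argument and show that the \emph{decision} version of $\MO(\Gamma)$ (``is there a satisfying assignment with at least $k$ ones?'') is solvable in time $O(2^{cn})$ for every $c>0$, where $n=|\var(I)|$.

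The central difficulty is that the number $m$ of distinct constraints of an instance is only polynomially, not linearly, bounded in $n$: it can be as large as $|\Gamma|\cdot n^{r}$, where $r$ is the largest arity occurring in $\Gamma$. A naive translation into $3$-$\SAT$ would introduce $\Theta(m)$ auxiliary variables and thereby destroy subexponentiality in $n$. I would get around this by sparsifying \emph{before} translating. Each relation of arity at most $r$ is expressible as a width-$r$ CNF over its own variables (the conjunction of at most $2^{r}$ forbidding clauses, using no fresh variables), so an instance of $\MO(\Gamma)$ turns into a Max-Ones problem over an $r$-CNF on the \emph{same} $n$ variables with $O(m)$ clauses. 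Applying the sparsification lemma to this $r$-CNF produces, for every $\epsilon>0$ and in time $2^{\epsilon n}\cdot\operatorname{poly}(n)$, a family of at most $2^{\epsilon n}$ sub-instances, each an $r$-CNF on the same $n$ variables with only $c_{\epsilon}\cdot n$ clauses, whose sets of satisfying assignments together cover that of the original. Because every sub-instance shares the variable set, the largest number of ones over the original feasible region equals the maximum of the optima over the sub-instances, so the Max-Ones objective is preserved under this branching.

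For one sparse sub-instance I would then build a genuine $3$-$\SAT$ encoding of the decision question on $O(n)$ variables: the $O(n)$ width-$r$ clauses are reduced to $3$-clauses at a cost of $O(1)$ fresh variables each, and the threshold ``at least $k$ ones'' is enforced by a standard binary counting circuit that sums the $n$ inputs, which needs only $O(n)$ gates and hence $O(n)$ further variables and clauses. As the resulting $3$-$\SAT$ instance has $O(n)$ variables, the hypothesis $3$-$\SAT\in\SE$ solves it in time $2^{o(n)}$; iterating over the $n+1$ possible thresholds $k$ (or binary searching) adds only a polynomial factor. Summing over the $2^{\epsilon n}$ branches gives total running time $2^{\epsilon n}\cdot 2^{o(n)}$, and for any prescribed $c>0$ a small enough choice of $\epsilon$ makes this $O(2^{cn})$; since this holds for all $c$, we conclude $\MO(\Gamma)\in\SE$.

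I expect the main obstacle to be the bookkeeping around the sparsification step rather than any isolated hard computation. Two points must be checked carefully: that the lemma may legitimately be invoked after the relations-to-clauses rewriting, so that no variables are added and the clause width stays bounded by the constant $r$; and that the Max-Ones value is genuinely preserved by the covering of satisfying assignments, so that taking a maximum over the subexponentially many branches returns the correct optimum (variables fixed by a branch simply contribute constant inputs to the counting circuit). The remaining ingredients — the clause-width reduction and the cardinality circuit — are routine and only require verifying that they remain linear in $n$.
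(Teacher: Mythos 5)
Your proof is correct, but it takes a genuinely different route from the one in the paper. The paper's proof never touches sparsification or gadget encodings directly: it defines size-constrained monadic SNP, cites Impagliazzo et al.~\cite{impagliazzo2001} for the fact that $3$-$\SAT$ is complete for this class under size-preserving SERF reductions, and then only has to exhibit a formula $\exists S, |S|\geq K\,.\,F$ (after passing from $\Gamma$ to the language of all $k$-clauses via q.p.p.\ implementation) witnessing that $\MO(\Gamma)$ lies in the class; the entire reduction to $3$-$\SAT$ is delegated to the cited completeness theorem. You instead inline that machinery: rewrite each constraint as a bounded-width CNF over its own variables, invoke the sparsification lemma explicitly to cover the solution set by $2^{\epsilon n}$ linear-size sub-instances, and encode each branch's threshold question into $3$-$\SAT$ with a Tseitin-style clause-width reduction plus an $O(n)$-gate binary counter. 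Both arguments rest, at root, on the same IPZ machinery (the completeness theorem the paper cites is itself proved by sparsification), so in a sense you have reproved the special case of that theorem that the paper uses as a black box. What your version buys is self-containedness and explicitness --- in particular the two points you flag (that sparsification preserves the solution set as a union, hence the Max-Ones optimum, and that the cardinality constraint costs only linearly many fresh variables) are exactly the issues the ``size-constrained quantifier'' formalism is designed to hide, and your quantifier bookkeeping (fix $\epsilon$ as a function of the target $c$ first, then choose the $3$-$\SAT$ constant relative to the resulting linear blow-up) is sound. What the paper's version buys is brevity and reusability: once membership in size-constrained MSNP is established, the same one-line argument applies to any other problem expressible in that logic, without re-deriving circuit encodings each time.
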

 \begin{proof}
Define SNP to be the class of properties expressible by formulas
of the type
$\exists S_1\ldots\exists S_n\forall x_1\ldots\forall x_m . F$
where
$F$ is a quantifier-free logical formula,
$\exists S_1\ldots\exists S_n$ are second order existential
quantifiers, and
$\forall x_1\ldots\forall x_m$ are first-order universal
quantifiers.
Monadic SNP (MSNP) is the restriction of SNP where all second-order
predicates are required to be unary~\cite{FeVa98}. The associated
search problem tries to identify instantiations of $S_1,\ldots,S_n$
that make the resulting first-order formula true.  We will be
interested in properties that can be expressed by formulas that
additionally contain {\em size-constrained} existential quantifier.  A
size-constrained existential quantifier is of the form $\exists S$,
$|S| \oplus s$, where $|S|$ is the number of inputs where relation $S$
holds, and $\oplus \in \{=, \leq , \geq \}$.  Define size-constrained
SNP as the class of properties of relations and numbers that are
expressible by formulas $\exists S_1\ldots\exists S_n\forall
x_1\ldots\forall x_m . F$ where the existential quantifiers are
allowed to be size-constrained.

If the ETH is false then $3$-$\SAT$ is solvable in subexponential time. By
Impagliazzo et al.$\,$~\cite{impagliazzo2001} this problem is
size-constrained MSNP-complete under size-preserving SERF
reductions. Hence we only have to prove that $\MO(\cdot)$ is included
in size-constrained MSNP for it to be solvable in subexponential
time. Impagliazzo et al.$\,$~\cite{impagliazzo2001} shows that $k$-SAT is
in SNP by providing an explicit formula $\exists S . F$ where $F$ is a
universal formula and $S$ a unary predicate interpreted such that $x
\in S$ if and only if $x$ is true. Let $k$ be the highest arity of any
relation in $\Gamma$. Since $k$-$\SAT$ can q.p.p.\ implement any $k$-ary
relation it is therefore sufficient to prove that $\MO(\Gamma^k_{\rm
  SAT})$ is in size-constrained MSNP, where $\Gamma^k_{\rm SAT}$ is
the language corresponding to all satisfying assignments of
$k$-SAT. This is easy to do with the formula

\[\exists S, |S| \geq K . F\]

where $K$ is the parameter corresponding to the number of variables
that has to be assigned 1.
\qed
\end{proof}

\begin{lemma} \label{lemma:ethvcsp}
If $\MO(\Gamma) \in \SE$ for every
finite Boolean constraint language $\Gamma$, then
$\UWVCSP_{d}(\Delta) \in \SE$ for every
finite set of Boolean cost functions $\Delta$ and arbitrary $d \geq 0$.
\end{lemma}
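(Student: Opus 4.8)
The plan is to exhibit, for each finite set $\Delta$ of cost functions and each $d \ge 0$, a single finite Boolean constraint language $\Gamma_{\Delta}$ together with an LV-reduction from $\UWVCSP_{d}(\Delta)$ to $\MO(\Gamma_{\Delta})$. Since the hypothesis guarantees $\MO(\Gamma_{\Delta}) \in \SE$ and LV-reductions preserve subexponential complexity (as established in the preliminaries), this immediately yields $\UWVCSP_{d}(\Delta) \in \SE$. As a first normalisation step, observe that $\Delta$ is finite, so only finitely many rational values occur among its cost functions; after multiplying all cost functions by a common denominator we may assume that every $f \in \Delta$ is integer-valued with values in $\{0,1,\dots,M\}$ for a constant $M$ depending only on $\Delta$. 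This rescaling changes neither the feasible assignments nor the set of minimisers, only the scale of the objective.

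The core of the construction is a gadget that converts \emph{cost} into \emph{ones}. For each $f \in \Delta$ of arity $k$ I introduce the $(k+M)$-ary relation $R_f = \{(\tup{a},\tup{b}) \in \B^{k}\times\B^{M} \mid f(\tup{a}) + (\text{number of ones in }\tup{b}) \le M\}$ and set $\Gamma_{\Delta} = \{\neqq\} \cup \{R_f \mid f \in \Delta\}$, a finite language of constant-arity relations. Given an instance of $\UWVCSP_{d}(\Delta)$ on variables $x_1,\dots,x_n$ with terms $f_1(\tup{x}^1),\dots,f_q(\tup{x}^q)$, I keep the variables $x_1,\dots,x_n$ and, for each term $f_i(\tup{x}^i)$, add $M$ fresh \emph{counter} variables $b_{i,1},\dots,b_{i,M}$ together with the constraint $R_{f_i}(\tup{x}^i,b_{i,1},\dots,b_{i,M})$. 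To neutralise the contribution of the original variables to the \emph{unweighted} $\MO$ objective, I also add for each $x_j$ a fresh twin $x_j'$ with the constraint $\neqq(x_j,x_j')$, which forces $h(x_j)+h(x_j')=1$ in every solution.

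For correctness, fix an assignment $\tup{a}$ to the original variables. The constraint $R_{f_i}$ permits at most $M - f_i(\tup{a}^i)$ of the counters $b_{i,\cdot}$ to equal $1$, and since distinct counter blocks share no other constraints, an optimum of $\MO(\Gamma_{\Delta})$ sets exactly $M - f_i(\tup{a}^i)$ of them to $1$. The counters therefore contribute $\sum_i (M - f_i(\tup{a}^i)) = qM - \mathrm{cost}(\tup{a})$, while the twin pairs contribute a fixed $n$. Hence the $\MO$ optimum equals $n + qM$ minus the $\VCSP$ optimum, and a threshold $k$ for $\UWVCSP_{d}(\Delta)$ translates to the threshold $n + qM - k$ for $\MO(\Gamma_{\Delta})$, giving a correct (and polynomial-time computable) many-one reduction between the decision versions.

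It remains to check the variable blow-up, which is the crux of the argument. The new instance has $n$ original variables, $n$ twins, and $qM$ counters, i.e. $2n + qM$ variables in total; because the density restriction forces $q \le d\,n$, this is at most $(2 + dM)\,n$, a linear blow-up, so the map is an LV-reduction. (If one insists on the exact form $C\cdot|\var(I)| + O(1)$, one pads with dummy $\neqq$-twin pairs, each contributing exactly $1$ to the objective.) The main obstacle — and precisely the reason the statement is restricted to $\UWVCSP_{d}$ rather than the unrestricted $\UWVCSP$ — is exactly this bound: without $q \le d\,n$ the number of counter variables need not be linear in $n$, and the construction would no longer be an LV-reduction. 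The only further technical care needed is the integer rescaling of the rational costs and the twin trick that prevents the original variables from polluting the unweighted objective.
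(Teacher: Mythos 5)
Your proof is correct, but it takes a genuinely different route from the paper's. The paper proceeds in two stages: it first shows that if $\MO(\Gamma) \in \SE$ for every finite $\Gamma$ then the unweighted minimization variant $\MinO(\Gamma) \in \SE$ for every finite $\Gamma$ (via the same $\neqq$-twin trick you use, but with \emph{two} twins per original variable, so that Min-Ones optima translate into Max-Ones optima), and then LV-reduces $\UWVCSP_{d}(\Delta)$ to $\MinO(\Gamma)$ by encoding costs in unary: for each $f \in \Delta$ of arity $k$ it builds a relation $R_f$ of arity $k + 2^k$ whose last $2^k$ coordinates are indicator variables selecting which tuple $\tup{a} \in \B^k$ was used, and it then attaches equality chains of length $f(\tup{a})$ to the relevant indicator so that an assignment of cost $c$ forces about $c$ extra ones. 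Your construction instead reduces \emph{directly} to $\MO$, replacing cost-counting by complementary counting: your $R_f$ of arity $k+M$ permits at most $M - f(\tup{a})$ of the $M$ free counters to be set to $1$, so maximizing ones literally minimizes cost, and a single twin per original variable neutralizes their contribution to the objective. Both arguments rest on the same two pillars --- rescaling the rational costs to integers (legitimate since $\Delta$ is finite) and the density bound $q \le d\,n$, which is exactly what makes the per-term gadgets linear in $n$ --- but yours avoids the intermediate Min-Ones problem, the $2^k$-ary indicator relation, and the equality chains, and its correctness argument is shorter because your gadget depends only on the number of ones among the counters, making the optimal extension immediate. What the paper's detour buys is a reusable standalone fact (Min-Ones is in $\SE$ whenever Max-Ones is, for every language); your padding remark also quietly repairs a looseness present in the paper itself, whose variable count is likewise only an upper bound of the form required by the definition of an LV-reduction.
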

\begin{proof}
We first show that if every $\MO(\Gamma) \in \SE$, then
the minimization variant $\MinO(\Gamma) \in \SE$ for all $\Gamma$, too.
Arbitrarily choose a finite constraint language $\Gamma$ over $\B$.
We present an LV-reduction from $\MinO(\Gamma)$ to $\MO(\Gamma \cup \{\neqq\})$.
Let $(\{v_1,\ldots,v_n\},C)$ be an arbitrary instance of $\MinO(\Gamma)$ with optimal value $K$.
Consider the following instance $I'$ of $\MO(\Gamma \cup \{\neqq\})$:
\[(\{v_1,v_1',v_1'',\ldots,v_n,v_n',v_n''\},C \cup \{\neqq(v_1,v_1'),\neqq(v_1,v_1''),\ldots,\neqq(v_n,v_n'),\neqq(v_n,v_n''\}).\]
For each variable $v_i \in \{v_1,\ldots,v_n\}$ that is assigned 0, the corresponding
variables $v_i',v_i''$ are assigned 1, and vice-versa.
It follows that the optimal value of $I'$ is $2n-K$.
Hence, $\MinO(\Gamma) \in \SE$
since $\MO(\Gamma \cup \{\neqq\}) \in \SE$.

\medskip

Now, arbitrarily choose $d \geq 0$ and a finite set of
Boolean cost functions $\Delta$.
Since $\Delta$ is finite, we may
without loss of generality assume that each function $f \in \Delta$ has its range in $\{0,1,2,\ldots\}$.

We show that $\UWVCSP_{d}(\Delta) \in \SE$ by exhibiting
an LV-reduction from $\UWVCSP_{d}(\Delta)$ to $\MinO(\Gamma)$ where $\Gamma$ is finite
and only depends on $\Delta$.
Given a tuple $\tup{a} = (a_1,\ldots,a_k) \in \B^k$, let
$\val(\tup{a}) = 1+\sum_{j : a_j=1} 2^{j-1}.$
For each $f \in \Delta$ of arity $k$, define
\begin{align*}
R_f &= 
\left\{(x_1,\ldots,x_k,y_1,\ldots,y_{2^k}) \in \B^{k+2^k} \middle|
\begin{aligned}
&f(x_1,\ldots,x_k) > 0,\\
&\{ i : y_i \ne 0\}=\{\val(x_1,\ldots,x_k)\}
\end{aligned}
\right\}\\
& \quad\cup
\{(x_1,\ldots,x_k,0,\ldots,0) \in \B^{k+2^k} \mid f(x_1,\ldots,x_k) =0\},
\end{align*}
and let $\Gamma=\{\eq,\neqq\} \cup \{R_f \mid f \in \Delta\}$.

One may interpret $R_f$ as follows: for each $(x_1,\ldots,x_k) \in \B^k$
the relation
 $R_f$ contains exactly
one tuple $(x_1,\ldots,x_k,y_1,\ldots,y_{2^k})$. If $f(x_1,\ldots,x_k)=0$,
then this is the tuple $(x_1,\ldots,x_k,0,\ldots,0)$. If $f(x_1,\ldots,x_k) > 0$,
then this is the tuple $(x_1,\ldots,x_k,0,\ldots,1,\ldots,0)$ where the 1 is in position
$k+\val(x_1,\ldots,x_k)$. We show below how $R_f$ can be used for ``translating''
each $\tup{x} \in \B^k$ into its corresponding weight as prescribed by $f$.

Let $(V,\sum_{i=1}^m f_i(\tup{x}_i))$ be an arbitrary instance of
$\UWVCSP_{d}(\Delta)$ where $V=\{v_1,\ldots,v_n\}$.
Let $\arity(f_i)$ denote the arity of function $f_i$.
Assume the instance has an optimal solution with value $K$.
For each term $f_i(v_1,\ldots,v_k)$ in the sum, do the following:

\begin{enumerate}
\item
introduce $2^{k}$ fresh variables $v'_{1},\ldots,v'_{2^k}$,

\item
introduce $k$ fresh variables $w_{1},\ldots,w_{k}$,

\item
for each $\tup{a} \in \B^k$ such that $f(\tup{a}) > 1$,
introduce $n'=f(\tup{a})$ fresh variables $u_0,\ldots,u_{n'-1}$,

\item
introduce the constraint
$R_f(v_{1},\ldots,v_{k},v'_{1},\ldots,v'_{2^k})$,

\item
introduce the constraints
$\neqq(v_{1},w_{1}),\ldots,\neqq(v_{k},w_{k})$, and

\item
for each $\tup{a} \in \B^k$, let $n'=f(\tup{a})$ and do the following
if $n' > 1$:   let $p=\val(\tup{a})$ and introduce
the constraints $\eq(v'_p,u_0),\eq(u_0,u_1),\ldots,\eq(u_{n'-2},u_{n'-1})$.

\end{enumerate}

It is not difficult to realize that the resulting instance has
optimal value $K+\sum_{i=1}^m \arity(C_i)$ given the interpretation of $R_f$
and the following motivation of step 5: the $\neqq$ constraints introduced in step 5
ensure that the weight of $(x_1,\ldots,x_k)$
does not influence the weight of the construction and this explains that
we need to adjust the optimal value with $\sum_{i=1}^m \arity(C_i)$.

Furthermore, the instance contains at most

\[|V| + |C| \cdot (2s + t \cdot (2^s+1))\]

variables
where $s=\max\{\arity(f) \mid f \in \Delta\}$ and
$t=\max\{f(\tup{a}) \mid f \in \Delta \; {\rm and} \; \tup{a} \in \B^{\arity(f)}\}$.
By noting that $|C| \leq d|V|$ and that $s,t$ are constants
that only depend on $\Delta$, it follows that the reduction
is an LV-reduction.
\qed
\end{proof}

\end{document}